\newcommand{\blind}{0}
\DeclareMathOperator*{\argmax}{arg\,max}
\DeclareMathOperator*{\argmin}{arg\,min}
\newtheorem{theorem}{Theorem}[section]
\newtheorem{proposition}[theorem]{Proposition}
\newcommand{\Ecal}{\mathcal{Y}}
\newcommand{\Esp}[1]{\mathbb{E}\left[#1\right]}
\newcommand{\centro}{\gamma}
\newcommand{\Gamm}{\Gamma_{\card}}
\newcommand{\norm}[1]{\lVert #1 \lVert_{\Ecal}}
\newcommand{\rep}{q_{\Gamm}}
\newcommand{\qgs}{q_{{\Gamm}^{\star}}}
\newcommand{\scalprodE}[2]{\langle #1,#2\rangle_{\Ecal}}
\newcommand{\scalprod}[2]{\Esp{\scalprodE{#1}{#2}}}
\newcommand{\scalprodomega}[2]{\Esp{\scalprodE{#1}{#2}}}
\newcommand{\biais}{g}
\newcommand{\var}{\mathbb{V}}
\newcommand{\clust}[2]{C^{#1}_{#2}}
\newcommand{\yinclust}[3]{#1 \in \clust{#2}{#3}}
\newcommand{\card}{\ell}
\newcommand{\campbell}{h}
\newcommand{\Ytrue}{{y}}
\newcommand{\Ypred}{\hat{y}}
\newcommand{\dime}{64}
\newcommand{\Gammaref}{\Gamm^{\star}}
\newcommand{\Gammapred}{\hat\Gamm^{\star}}
\newcommand{\erreurprobarelative}{\epsilon_P^{MM}}
\newcommand{\nbgamma}{100}
\newcommand{\ntrue}{n_{\mathrm{maps}}}
\newcommand{\npred}{n_{\mathrm{maps}}}
\newcommand{\Dspace}{\mathcal{X}}
\newcommand{\supp}{{\mathrm{supp}}}
\newcommand{\thresh}{t}
\newcommand{\estim}{\hat{E}}
\newcommand{\ntrain}{n_{\mathrm{train}}}
\newcommand{\eqdef}{:=}
\newcommand{\Ypca}{y^{\mathrm{pca}}}
\newcommand{\Yproj}{y^{\mathrm{proj}}}
\newcommand{\nvar}{n_{v}}
\newcommand{\nerr}{n_{e}}
\newcommand{\npc}{n_{\mathrm{pc}}}
\begin{document}

\def\spacingset#1{\renewcommand{\baselinestretch}%
{#1}\small\normalsize} \spacingset{1}


\if0\blind
{
  \title{\bf 
  Quantizing rare random maps: application to flooding visualization}
  \author{Charlie Sire \\
    \small IRSN, BRGM, Mines Saint-Etienne, Univ. Clermont Auvergne, CNRS, UMR 6158 LIMOS\\
    and \\
    Rodolphe Le Riche \\
    \small Mines Saint-Etienne, Univ. Clermont Auvergne, \small CNRS, UMR 6158 LIMOS\\
    and \\
     Didier Rullière \\
    \small Mines Saint-Etienne, Univ. Clermont Auvergne, CNRS, UMR 6158 LIMOS\\
    and \\
    Jérémy Rohmer \\
    \small BRGM\\
    and\\
     Lucie Pheulpin \\
    \small IRSN \\
    and\\
    Yann Richet \\
    \small IRSN \\
    }
    \date{}

  \maketitle
} \fi
\if1\blind
{
  \bigskip
  \bigskip
  \bigskip
  \begin{center}
    {\LARGE\bf Title}
\end{center}
  \medskip
} \fi

\bigskip
\begin{abstract}
Visualization is an essential operation when assessing the risk of rare events such as coastal or river floodings. 
The goal is to display a few prototype events that best represent the probability law of the observed phenomenon, a task known as quantization.
It becomes a challenge when data is expensive to generate and critical events are scarce, like extreme natural hazard. In the case of floodings, each event relies on an expensive-to-evaluate hydraulic simulator which takes as inputs offshore meteo-oceanic conditions and dyke breach parameters to compute the water level map. 
In this article, Lloyd's algorithm, which classically serves to quantize data, is adapted to the context of rare and costly-to-observe events.
Low probability is treated through importance sampling, while Functional Principal Component Analysis combined with a Gaussian process deal with the costly hydraulic simulations. 
The calculated prototype maps represent the probability distribution of the flooding events in a minimal expected distance sense,
and each is associated to a probability mass.
The method is first validated using a 2D analytical model and then applied to a real coastal flooding 
scenario.  
The two sources of error, the metamodel and the importance sampling, are evaluated to quantify the precision of the method.
\end{abstract}

\noindent%
{\it Keywords:}  Visualization, Quantization, Flood maps, Rare events, Metamodel
\vfill

\newpage
\spacingset{1.5} 
\section{Introduction}

Political authorities typically support their decisions regarding flood risk management based of flood maps calculated with hydraulic simulations.
\citealt{Beven} provide a realistic application, and a review of the existing methods for graphical representations of flooding uncertainty is given by \citealt{Seipel}.

The design and analysis of numerical experiments is a generic procedure to capture and explore the uncertainties related to such heavy numerical models (\citealt{Santner}). 
It consists in performing a series of runs of the hydrodynamic flow simulator 
where the forcing conditions (the river flow rate and the offshore wave characteristics), 
the breach characteristics (location and dimension of the breach) or the embankments characteristics (geometry, failures conditions, etc.) are one realization of a random law. 
\citealt{Rohmer} provide an example of the setting up of the forcing scenarios in coastal simulations and specify a method to determine the probabilistic laws describing the natural phenomena. 

Visualizing simulations for risk assessment is a real challenge because of their time-consuming aspect, the low probability of the flooding event, and the need to communicate the results to authorities who do not necessarily have a scientific background.
To this end, a simple-but-efficient approach consists in providing a small size set of representative flood maps. 
Such prototype maps (\citealt[chap. 13]{Hastie}) are chosen to best represent the probability law associated to the flooding event. 
Calculating the prototype maps is a quantization task (\citealt{Gray}), implemented here through the Lloyd's algorithm (\citealt{Pages}). 
Lloyd's algorithm searches for Voronoi cells centroids such that the mean distance (with respect to the uncertainty distribution) between an observation and its nearest centroid is minimized. 

This procedure yields results that have both a probabilistic meaning and can be visually interpreted by a wide audience. For instance, in the flooding case tackled in this article, the four non-empty prototype maps and their probability masses expressed as frequencies are given in Figure~\ref{example_quanti}.
The empty prototype map occurs with $97\%$ probability.

\begin{figure*}[hbt!]
\centering
  \begin{subfigure}[t]{0.38\textwidth}
\includegraphics[width=1\linewidth]{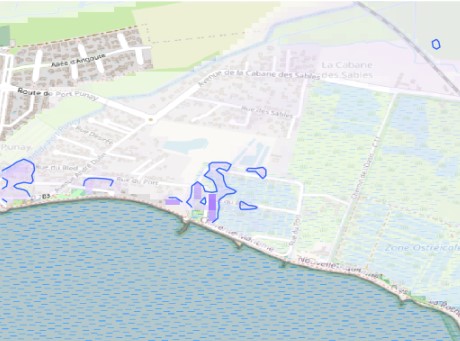}
\caption{$2.0\times10^{-2}$, 1 in 50}
 \end{subfigure}%
 ~ 
\begin{subfigure}[t]{0.38\textwidth}
\includegraphics[width=1\linewidth]{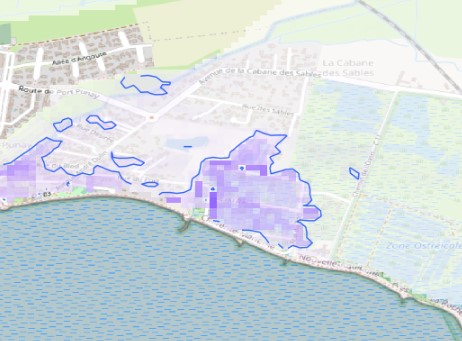}
\caption{$5.2\times 10^{-2}$,
1 in 192}
  \end{subfigure}
\begin{subfigure}[t]{0.38\textwidth}
\includegraphics[width=1\linewidth]{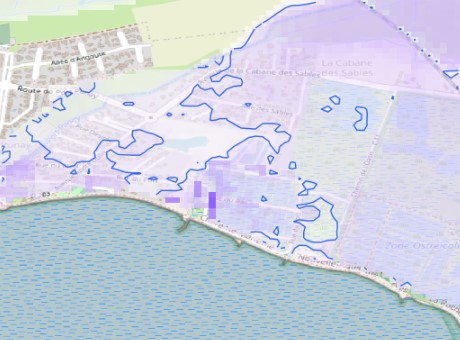}
\caption{$6.2\times 10^{-3}$, 1 in 161}
\label{clust4}
  \end{subfigure}
\begin{subfigure}[t]{0.38\textwidth}
\includegraphics[width=1\linewidth]{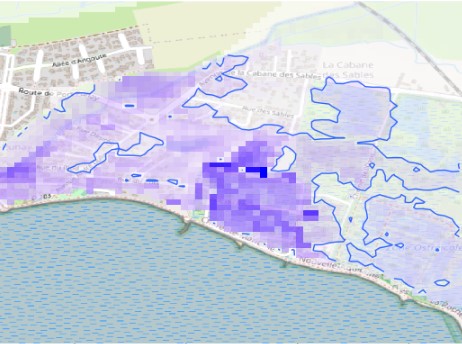}
\caption{$1.4\times 10^{-3}$, 
1 in 714}
\label{clust5}
  \end{subfigure}
  \caption{The four non-empty prototype maps, computed to minimize the expected distance between a random map and its nearest prototype. The coast is at Les Boucholeurs, France, and the maps are $1.6\times1.6 ~\mathrm{km}^2$. The probability mass associated to each prototype map is the probability that a random map is closer to this map than to the other prototypes, for a given distance. The frequence, i.e., the inverse of the probability is added for an easier interpretation. More details about this case study are given in Section~\ref{flooding_sect}. 
}
\label{example_quanti}
\label{quanti_intro}
\end{figure*}

With this visualization, one can picture the most typical flooding scenarios, like Figure~\ref{clust4} where water invades a large area, or more intense and concentrated flooding like Figure~\ref{clust5}.

Lloyd's algorithm is not a new quantization technique (\citealt{MacQueen}), and has evolved since its introduction (\citealt{Bock}). 
This paper presents new extensions to quantization that address the following bottlenecks.
First, the quantization is done here in the context of costly numerical simulations, which makes the computation of the centroids more difficult. Indeed, the flood maps are computed with hydraulic codes that can take several hours to simulate the water levels associated to each set of input parameters.
Plus, flooding is a rare event. Classical Monte Carlo sampling is inefficient for extreme events with a very large return period (typically 1000-10000 years).
Finally, the quantization is performed in a space of pixelated maps ($64\times 64$) which raises metamodeling and storage questions when a very large number $(> 10^6)$ of realizations has to be handled.

The paper is structured as follows. 
Section~\ref{background} defines the quantization problem and the importance sampling estimator to compute the prototype maps. 
Section~\ref{Metamodelisation} introduces the metamodeling scheme built to overcome the duration of the hydraulic simulations. 
Section~\ref{algoandperf} presents the algorithm and the performance metrics that will be evaluated on the two applications described in Section~\ref{campbell_sect} (analytical test case) and Section~\ref{flooding_sect} (real coastal flooding). 
Finally, Section~\ref{perspectives} discusses the main results and gives possible extensions to the method. 
A list of symbols is also provided at the end.

\FloatBarrier

\section{Finding prototype maps} \label{background}

\subsection{Quantization}

Let us consider a random vector $X\in \Dspace$, with $\Dspace$ a subset of $\mathbb{R}^{d},d \in \mathbb{N}$. 
In our context, $X$ is a parametric description of the offshore hydrodynamic (like the tide and surge parameters)
and the dyke breach conditions. 
We define a random function $Y$ from $\Dspace$ the space of offshore conditions and breach parameters to $\Ecal$ the space of possible flood maps, with an inner product $\langle .,.\rangle_{\Ecal}$, and a metric $\norm{.}$ induced by the inner product:
\begin{align*}
  Y \colon \Dspace &\to \Ecal\\
  x &\mapsto Y(x).
\end{align*}
In other terms, with this general formulation, $Y$ associates a distribution of flood maps to every vector of input conditions $x$. 
In practice, $Y$ can be deterministic and thus associate a single flood map to every $x$ which we write $y(x)$.

The quantization problem is to find a set $\Gamm = \{\centro_{1}, \centro_{2},\dots,\centro_{\card}\} \in \Ecal^{\card}$ of $\card \in \mathbb{N}$ representatives of $Y(X)$ in a mathematically relevant way. 
The notation $Y(X)$ should be understood as the map $Y$ conditioned by the random vector $X$ as in regression, $Y(X) \equiv Y\!\!\mid\!\! X$.
The prototype maps will be these representatives in our study. 
More precisely, we first define for $\Gamm \in \Ecal^{\card}$ the closest candidate representative as the function $\rep$:  
\begin{align*}
  \rep \colon \Ecal &\to \Gamm\\
  y &\mapsto \rep(y) = \underset{\centro_{i} \in \Gamm} {\argmin}\norm{y-\centro_{i}} ~.
 \end{align*}
  
A measure of the quantization error is the average distance between a randomly chosen map and its closest representative 
(\citealt{Pages}),
\begin{equation*}
e(\Gamm) = \left[\Esp{\norm{Y(X)-\rep(Y(X))}^{2}}\right]^{\frac{1}{2}} ~.
\end{equation*}
To each centroid $\centro_{j}$ in $\Gamm$, we associate a Voronoi cell $\clust{\Gamm}{j}$
which is the subset of points in $\Ecal$ closest to $\centro_j$,
\begin{equation*}
\clust{\Gamm}{j} = \{y \in \Ecal, \rep(y) = \centro_{j}\} ~.
\end{equation*}

The objective is to minimize this quantization error, i.e., to find 
\begin{align*}
 \Gamm^{\star} = \{\centro^{\star}_{1},\dots, \centro^{\star}_{\card}\} &\in \underset{\Gamm \in \Ecal^{\card}}{\argmin}\: (e(\Gamm)) \\
  &\in \underset{\Gamm \in \Ecal^{\card}}{\argmin}\:\left[\Esp{\norm{Y(X)-\rep(Y(X))}^{2}}\right]^{\frac{1}{2}} \\
 &\in \underset{\Gamm \in \Ecal^{\card}}{\argmin} \: \left[\Esp{\underset{i \in \{1\dots\card\}}{\min} \: \norm{Y(X) - \centro_{i}}^{2}}\right]^{\frac{1}{2}}~.
\end{align*}

Finding the optimal centroids is in general a NP hard problem (\citealt{Mahajan}). 
However, the solution has a strong attribute (\citealt{Pages2}):

\begin{theorem} [Optimal quantization, Kieffer, Cuesta-Albertos] 
\label{pointfixe}
~\\
If 
\begin{minipage}[t]{0.8\textwidth}
\begin{itemize}[label=\raisebox{0.25ex}{\tiny$\bullet$}] 
    \item $\Ecal$ is of finite dimension m ~,
    \item $\forall y,y' \in \Ecal,\scalprodE{y}{y'} = \sum_{i = 1}^{m} \lambda_{i}y_{i}y'_{i}$ with $\forall i, \lambda_{i} > 0$~,
    \item $\Esp{\norm{Y(X)}^{2}} < +\infty$~,
\end{itemize} 
\end{minipage}
\\
then ~\quad $\forall i \in \{1\dots\card\}, \quad \Esp{Y(X)\mid \yinclust{Y(X)}{\Gamm^{\star}}{i}} = \centro^{\star}_{i}$~.
\end{theorem}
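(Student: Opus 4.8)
The plan is to establish this fixed-point characterization through a first-order sub-optimality argument rather than through differentiation: I would freeze the Voronoi partition induced by the optimal centroids and exploit the fact that, for a \emph{fixed} partition, the distortion decouples into independent one-centroid problems, each solved by a conditional mean. Since minimizing $e(\Gamm)$ is equivalent to minimizing the squared distortion $e(\Gamm)^{2}=\Esp{\min_{i}\norm{Y(X)-\centro_{i}}^{2}}$, I work throughout with the latter. For the optimal set $\Gamm^{\star}=\{\centro^{\star}_{1},\dots,\centro^{\star}_{\card}\}$, introduce the auxiliary functional
\[
F(\centro_{1},\dots,\centro_{\card}) = \sum_{i=1}^{\card}\Esp{\norm{Y(X)-\centro_{i}}^{2}\,\mathds{1}_{\yinclust{Y(X)}{\Gamm^{\star}}{i}}},
\]
in which the cells $\clust{\Gamm^{\star}}{i}$ are held fixed at their optimal values. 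Because $\rep$ is single-valued, these cells partition $\Ecal$, so two facts hold at once: on $\clust{\Gamm^{\star}}{i}$ the nearest optimal centroid is $\centro^{\star}_{i}$, whence $F(\Gamm^{\star})=e(\Gamm^{\star})^{2}$; and for any competitor $\Gamm$ the pointwise bound $\min_{i}\norm{Y(X)-\centro_{i}}^{2}\le\norm{Y(X)-\centro_{i}}^{2}$ on each cell gives $e(\Gamm)^{2}\le F(\Gamm)$.

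Next I would minimize $F$ directly. It is separable, so it suffices to minimize each term $\Esp{\norm{Y(X)-\centro_{i}}^{2}\mathds{1}_{\yinclust{Y(X)}{\Gamm^{\star}}{i}}}$ over $\centro_{i}$. The hypotheses guarantee that $\scalprodE{\cdot}{\cdot}$ is a genuine inner product (finite dimension, all $\lambda_{i}>0$) and that the relevant second moments are finite, so the classical bias–variance decomposition applies conditionally on the event $\yinclust{Y(X)}{\Gamm^{\star}}{i}$: writing $p_{i}=\mathbb{P}(\yinclust{Y(X)}{\Gamm^{\star}}{i})$, the $i$-th term equals $p_{i}\,\Esp{\norm{Y(X)-\centro_{i}}^{2}\mid \yinclust{Y(X)}{\Gamm^{\star}}{i}}$, and is uniquely minimized, by the strict convexity coming from the positive weights $\lambda_{i}$, at the conditional mean $\centro_{i}=\Esp{Y(X)\mid \yinclust{Y(X)}{\Gamm^{\star}}{i}}$. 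Call this minimizer $\hat\centro_{i}$ and set $\hat\Gamma=\{\hat\centro_{1},\dots,\hat\centro_{\card}\}$.

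Finally I would close the loop. The chain $e(\hat\Gamma)^{2}\le F(\hat\Gamma)\le F(\Gamm^{\star})=e(\Gamm^{\star})^{2}$ combines the partition bound, the minimality of $\hat\Gamma$ for $F$, and the identity established above; but optimality of $\Gamm^{\star}$ forces $e(\Gamm^{\star})^{2}\le e(\hat\Gamma)^{2}$, so every inequality is in fact an equality. In particular $F(\hat\Gamma)=F(\Gamm^{\star})$, and since each separable term has a unique minimizer, $\centro^{\star}_{i}=\hat\centro_{i}=\Esp{Y(X)\mid \yinclust{Y(X)}{\Gamm^{\star}}{i}}$ for every $i$, which is the asserted fixed-point relation.

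The step I expect to require the most care is the well-posedness of the conditional means: the argument presupposes $p_{i}>0$ for all $i$, i.e. that no optimal cell is empty, so I would either adopt this as a standing assumption (automatically satisfied whenever the support of $Y(X)$ contains at least $\card$ points) or argue separately that an optimal quantizer cannot leave a cell empty without contradicting optimality. The passage from equal distortions to equal centroids likewise rests essentially on the uniqueness of each term's minimizer, which is precisely what the positivity of the weights $\lambda_{i}$ secures.
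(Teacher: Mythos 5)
Your proof is correct, and it follows the same overall skeleton as the paper's — compare $\Gamm^{\star}$ with the quantizer built from the conditional means over the frozen optimal partition, sandwich the distortions using optimality of $\Gamm^{\star}$, and force equality — but it implements the decisive step by a genuinely different device. The paper works in the Hilbert space $L^{2}_{\Ecal}(\mathcal{F}_{0})$: it first proves (Proposition~\ref{proj_ortho}) that conditional expectation with respect to $\sigma(\qgs(Y))$ is the orthogonal projection onto $L^{2}(\mathcal{F})$, then invokes the Pythagorean identity
\begin{equation*}
\Esp{\norm{Y-\qgs(Y)}^{2}}=\Esp{\norm{Y-\Esp{Y\mid \qgs(Y)}}^{2}}+\Esp{\norm{\qgs(Y)-\Esp{Y\mid \qgs(Y)}}^{2}}
\end{equation*}
and concludes by forcing the second term to vanish, so that $\qgs(Y)=\Esp{Y\mid \qgs(Y)}$ almost surely. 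Your frozen-partition functional $F(\hat\Gamma)$ is exactly the paper's middle quantity, since $\Esp{Y\mid \qgs(Y)}$ equals $\hat{\centro}_{i}$ on the event $\yinclust{Y}{\Gamm^{\star}}{i}$; but where the paper needs the measure-theoretic projection lemma plus Pythagoras, you minimize each cell's strictly convex quadratic directly and extract the conclusion from uniqueness of the per-cell minimizer. What your route buys is elementarity and transparency: the positivity of the weights $\lambda_{i}$ enters visibly through strict convexity rather than being buried in the projection argument, and no vector-valued conditional-expectation machinery is required. What the paper's route buys is generality of viewpoint (the projection characterization is reusable, e.g., for the random-field formulation it emphasizes). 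Finally, your explicit flagging of the $p_{i}>0$ issue is a point of care the paper glosses over: its conclusion $\qgs(Y)=\Esp{Y\mid \qgs(Y)}$ a.s.\ likewise pins down only the centroids of cells with positive probability mass, so both arguments prove the theorem in the same (implicitly restricted) sense.
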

This means that the representatives of an optimal quantification coincide with the Voronoi cell centroids.
The proof is given in Appendix \ref{proof_theorem}.
In our context, the distance induced by the scalar product $\scalprodE{y}{y'} = \sum_{i = 1}^{n} \lambda_{i}y_{i}y'_{i}$ between two maps is the mean squared difference between the water depth at each pixel of the maps, weighted by $\lambda_{i}$ which can represent the importance of the zone (for example the housing density, the presence of factories, etc.).

The popular Lloyd's algorithm, also called K-means, is based on Theorem \ref{pointfixe}.
\begin{algorithm}[H]
\begin{algorithmic}[1]
\STATEx {$\Gamm^{[0]} \gets \{\centro^{[0]}_{1},\dots,\centro^{[0]}_{\card}\}$} ~,~$k \gets 0$
\WHILE{stopping criterion not met}
\STATEx {$\centro^{[k+1]}_{j} \gets \Esp{Y(X)\mid \yinclust{Y(X)}{\Gamm^{[k]}}{j}}, j \in \{1,\dots,\card\}.$
}
\STATEx $k \gets k+1$
\ENDWHILE
\end{algorithmic}
\caption{Lloyd's algorithm}
\end{algorithm}
The stopping criterion of the algorithm can be a number of iterations or a lower bound on a distance between $\Gamm^{[k]}$ and $\Gamm^{[k+1]}$. 
The principle of the iterations is illustrated in Figure~\ref{voronoi} in 2D: 
the dots are 5 possible starting points
$\{\centro_{1}^{[0]},\dots, \centro_{5}^{[0]}\}$, 
their associated Voronoi cells are delimited by the black lines, 
the centroids of these cells are computed for a uniform distribution and are drawn as the triangles. 
This Figure shows only one iteration and does not represent the convergence of the algorithm.

\begin{figure}[ht]
\centering
\includegraphics[width=0.8\textwidth]{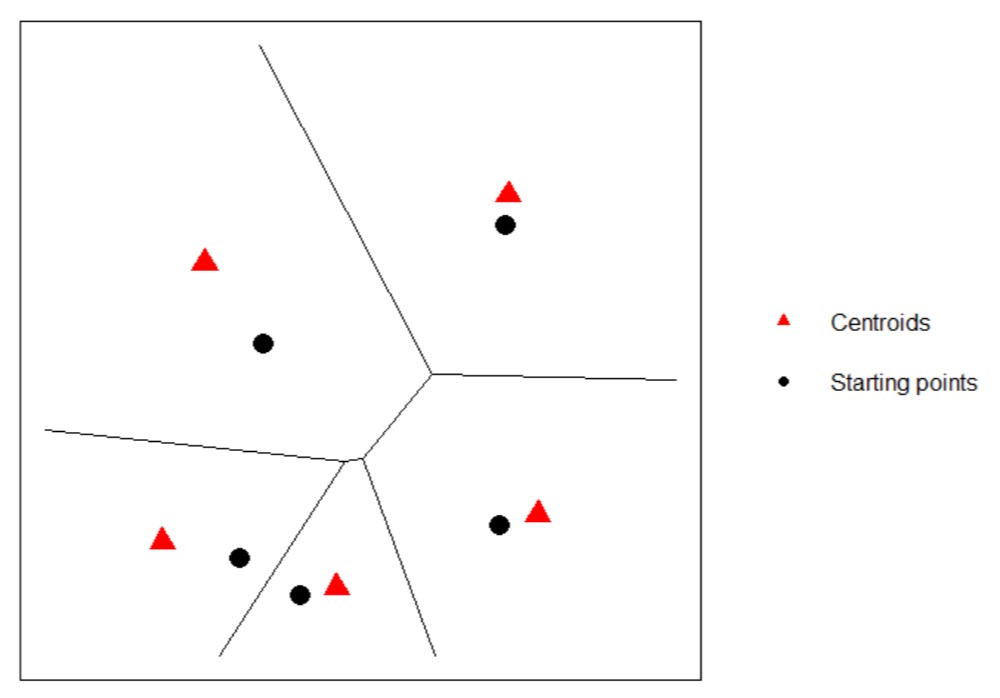}
\caption{An example of Voronoi cells associated to initial points and their centroids in 2D.}

\label{voronoi}
\end{figure}

\subsection{Dealing with low probability events}
 
As stated before, Lloyd's algorithm requires to compute at each iteration and for each Voronoi cell $\clust{\Gamm}{j}, j \in \{1,\dots, \card\}$, $\Esp{Y(X)\mid \yinclust{Y(X)}{\Gamm^{[k]}}{j}}$. The most natural approach would be to sample $Y(X)$ from its density and then to calculate the empirical mean of $Y(X)$ given $\yinclust{Y(X)}{\Gamm^{[k]}}{j}$. 
However, flooding being a rare event, we expect to have one Voronoi cell with almost only empty maps associated to a probability close to 1, and other Voronoi cells with more critical floodings associated to very low probabilities maps. 
Thus, the straightforward sampling is not relevant to locate the Voronoi cells of low probabilities because their member maps will almost surely be missed. 
Importance sampling is thus needed to perform a sampling less concentrated on a single cluster.
  
We will need to estimate centroids and probabilities, which reduces to calculating $\Esp{h(Y(X))}$ with $h \colon \Ecal \mapsto \mathcal{H}$ for some set $\mathcal{H}$. 
To compute the probability that $Y$ is in $A$, $P(Y(X) \in A)$, $h$ is the indicator function $h \colon y \mapsto \mathds{1}_{y \in A}$ with $A \subset \Ecal$ and $\mathcal{H}=\{0,1\}$. 
Alternatively $h$ is $h \colon y \mapsto y\mathds{1}_{y \in A}$ with $\mathcal{H}= \Ecal$ to calculate the mean of $Y$ when it belongs to $A$.

The importance sampling technique is based on the fact that the representation
of \\ $\Esp{h(Y(X))}$ as an expectation is not unique (\citealt{Ecuyer}). 
Indeed, if we consider $\tilde{X}$ a random variable with density function $\biais$ with a support such that $\supp(f_{X}) \subset \supp(\biais) \subset \Dspace$, then
$$\Esp{h(Y(X))} = \Esp{h(Y(\tilde{X}))\frac{f_{X}(\tilde{X})}{\biais(\tilde{X})}}~.$$

The importance sampling estimator of the mean comes from this last representation:
\begin{equation}
\estim_{n}^{IS}(h) = \frac{1}{n} \sum^{n}_{k=1}  h(Y(\tilde{X}^k))\frac{f_{X}(\tilde{X}^k)}{\biais(\tilde{X}^k)} \in \mathbb{R}^{p}~,\label{eq:ISestimator}
\end{equation}  
with $(\tilde{X}^k)^{n}_{k=1}$ i.i.d. random variables of density function $\biais$.
It is well known that such importance sampling estimator is unbiased and consistent (\citealt{tabandeh}) .

The covariance matrix of the estimator is \begin{equation}\label{covariance}
    \var\left(\estim_{n}^{IS}(h)\right) = \frac{1}{n}\var\left(h(Y(\tilde{X}))\frac{f_{X}(\tilde{X})}{\biais(\tilde{X})}\right) ~.
\end{equation}

The idea of the importance sampling method is to choose a relevant density function $\biais$, i.e., a density function that minimizes the terms of the diagonal of the above covariance matrix. 

\subsection{Importance sampling combined with quantization}\label{ISKM}
 
In Lloyd's algorithm, we need to repeatedly compute the expectations of $Y$ within the candidate Voronoi cells. At the $p^{th}$ iteration, these expectations are, 
\begin{equation}
\centro^{[p]}_{j} = \Esp{Y(X)\mid \yinclust{Y(X)}{\Gamm^{[p-1]}}{j}} = \frac{\Esp{Y(X)\mathds{1}_{\yinclust{Y(X)}{\Gamm^{[p-1]}}{j} }}}{\mathbb{P}(\yinclust{Y(X)}{\Gamm^{[p-1]}}{j})}
\label{eq-centroid}
\end{equation}

As for a given $\Gamm$, $\mathbb{P}(\yinclust{Y(X)}{\Gamm}{j}) = \Esp{\mathds{1}_{\yinclust{Y(X)}{\Gamm}{j}}}$ is not known, we rely on the importance sampling estimators for both the numerator and the denominator and we introduce

\begin{equation}\label{estims_is}
  \left\{
    \begin{array}{ll}
    \hat{P}_{n}(\Gamm, j, Y) &= \frac{1}{n} \sum^{n}_{k=1} \mathds{1}_{\yinclust{Y(\tilde{X}^k)}{\Gamm}{j} }\frac{f_{X}(\tilde{X}^k)}{\biais(\tilde{X}_{k})}~,\\
    \estim_{n}(\Gamm,j, Y) &= \frac{\frac{1}{n} \sum^{n}_{k=1} Y(\tilde{X}^k)\mathds{1}_{\yinclust{Y(\tilde{X}^k)}{\Gamm}{j} }\frac{f_{X}(\tilde{X}^k)}{\biais(\tilde{X}_{k})}}{\hat{P}_{n}(\Gamm, j)}~,
\end{array}
\right.
\end{equation}
\noindent the estimators of $\mathbb{P}(\yinclust{Y(X)}{\Gamm}{j})$ and $\Esp{Y(X)\mid \yinclust{Y(X)}{\Gamm}{j}}$ respectively. 
The choice of the distribution $\biais$ is delicate and will be further discussed in Section~\ref{flooding_sect}.

The overall approach was described so far in the very general context of a random field $Y$. In the following, these estimators will be computed with $y$ the deterministic output of the hydraulic simulators, and $\Ypred$ its predictor.

\section{Dealing with time-consuming experiments}\label{Metamodelisation}

In order to have an estimator of $\Esp{\Ytrue(X)\mid \yinclust{\Ytrue(X)}{\Gamm}{j}}$ of sufficiently small variance for all cells $j \in \{1,\dots, \card\}$, we need to compute $\Ytrue(\tilde{X}^k)$ for a sample $(\tilde{X}^k)^{n}_{k=1}$ of large size $n$, with $y$ the deterministic output of the hydraulic simulator.
Realistic physical simulations would be too costly for that purpose, 
and a metamodel is required to approximate $\Ytrue(X)$. 
Gaussian processes (GP) (\citealt{Rasmussen}) are used here to replace some of the simulations.
GPs are readily applicable to low dimensional outputs (most often scalars) but not to $\Ytrue(X)$, the flood map, which is in $\mathbb{R}^{4096}$ ($64\times64$).  
To overcome that, \citealt{Perrin} implemented a method based on functional principal component analysis (FPCA).
Their strategy, which we rely on, is to work with the projections of the maps in a space induced by a basis of $\npc\ll 4096$ maps only. 
The predictions are expressed as coordinates of a map in this basis, through a Gaussian process approach. 
In the following, we call ``metamodel`` the whole procedure: the FPCA followed by a Gaussian process regression.

\subsection{Projection on a basis of maps}

The objective here is to find a basis of $\npc$ maps, with $\npc$ small enough, and to compute the projection of the maps onto this basis,
thus representing every map $y(x)$ as a linear combination of the $\npc$ basis maps: 
\begin{equation*}
    \Yproj(x) \eqdef t_{1}(x)\Ypca_{1} + t_{2}(x)\Ypca_{2} + \dots + t_{\npc}(x)\Ypca_{\npc}~.
\end{equation*}

For this purpose, a functional principal components analysis (FPCA) is carried out on a training database of maps $(\Ytrue(x^{i}))_{i=1,\dots,\ntrain}$. The steps are the following, and further details are given in Appendix \ref{fpca_annex}. : 
\begin{itemize}
    \item Find a decomposition of these maps, seen as realizations of a spatial function, on a functional orthonormal basis $\Phi = (\Phi_{1},\dots,\Phi_{K})^{\top}$ :  $\Ytrue(x^{i}) = \Phi^\top \alpha(x^{i})$ 
    \item Reduce the number of basis functions by keeping the $\tilde{K}$ most important ones, $\tilde{\Phi}$, in terms of their contribution to the energy of the coordinates in the database $(\tilde{\alpha}(x^{i}))_{i=1,\dots,\ntrain}$.
    \item Perform a PCA on the selected coordinates to finally get the principal components $(t(x^{i}))_{i=1,\dots,\ntrain}$ with $t(x^{i}) = (t_1(x^{i}), \dots, t_{\npc}(x^{i}))^{\top}$ and the $\npc\times \tilde{K}$ projection matrix $\Omega$ : $\tilde{\alpha}(x^{i}) \simeq \Omega^{\top} t(x^{i})$.
\end{itemize}

The obtained FPCA-basis maps are $\Omega\tilde{\Phi}$. 
The two basis maps $\Ypca_{1}$ and $\Ypca_{2}$ of the coastal flooding example are shown in Figure~\ref{bases_sr}. 
Negative values can appear on these FPCA maps, but positive depths are recovered with the linear combinations of the principal components.

\begin{figure}[h!]
\centering
\includegraphics[width=1\textwidth]{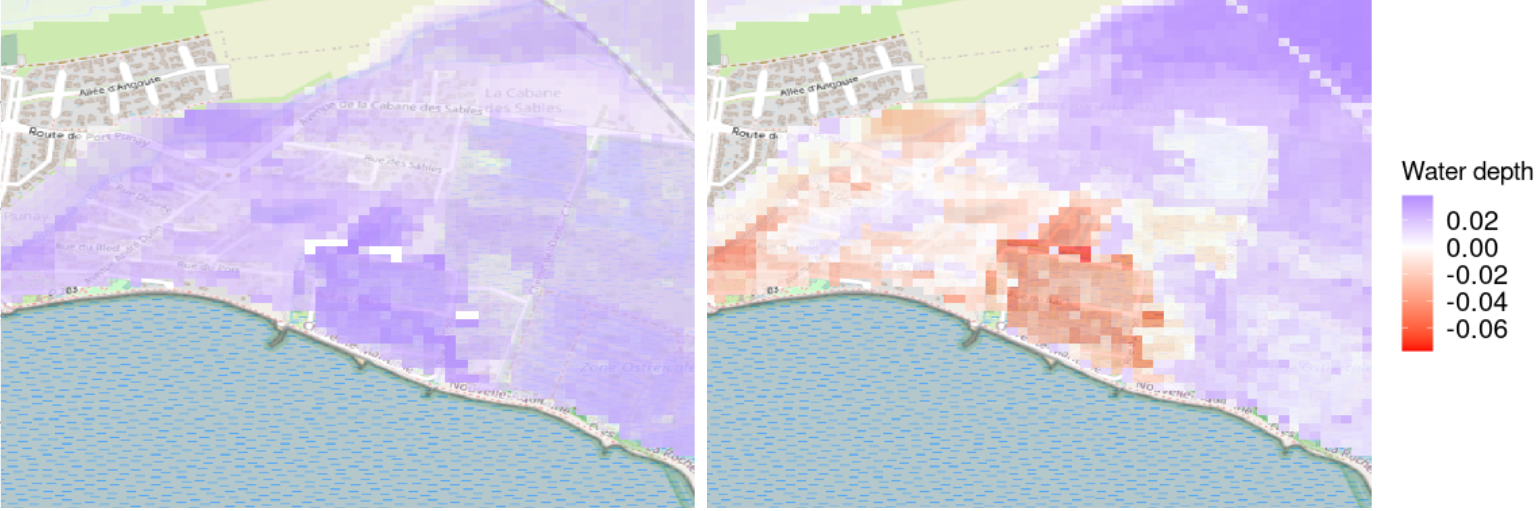}
\caption{$\Ypca_{1}$ and $\Ypca_{2}$, the two FPCA maps of the coastal flooding application without breach.}
\label{bases_sr}
\end{figure}

Figure~\ref{yworst} shows the map $y(x^{\mathrm{worst}})$ the most different from its projection $\Yproj(x^{\mathrm{worst}})$, among the 1300 maps of the training database $\mathcal{B}_{\mathrm{train}}$,
$$y(x^{\mathrm{worst}}) = \underset{y(x) \in \mathcal{B}_{\mathrm{train}}}{\argmax} \norm{y(x)-\Yproj(x)}~.$$
\\
The norm $\norm{~}$ selected here is induced by the usual inner product $\scalprodE{y}{y'} = \sum_{i = 1}^{m} y_{i}y'_{i}$.
\\

One can see that even in the worse case, the map $Y$ and its projection on the FPCA basis $\Yproj$ are quite close. 
This highlights that FPCA indeed captures most of the information.
A map $y(x)$ is satisfactorily compressed into a vector of 2 components, $\left(t_1(x),t_2(x)\right)$.

\begin{figure}[ht]
\centering
\includegraphics[width=1\textwidth]{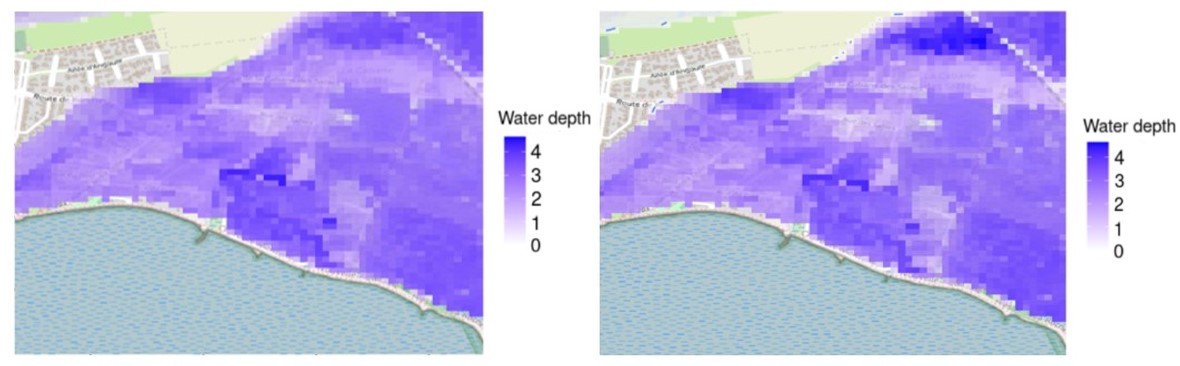}
\caption{Most distant maps before and after FPCA: true $y(x^{\mathrm{worst}})$ and its projection $\Yproj(x^{\mathrm{worst}})$.}
\label{yworst}
\end{figure}

\subsection{Gaussian process regression}

We now aim at predicting the flood map for a new input $x^{\star}$ by predicting its coordinates in the FPCA basis $(t_{1}(x^{\star}), \dots, t_{\npc}(x^{\star}))$. 
For simplicity these coordinates are assumed independent, but the study of their dependency may be relevant in some cases (\citealt{Gu}).
Therefore, $\npc$ one-dimensional Gaussian processes define the metamodel for the flood maps. 
Let $t \colon \Dspace \subset \mathbb{R}^{d} \longrightarrow \mathbb{R}$ be a deterministic function. 
$t$ is approximated as a realization of a Gaussian process $Z(x)$ of prior mean $\mu(x)$ and prior covariance kernel $k(x, x')$. The kernel contains the spatial dependencies between 
$y(x)$ and $y(x')$. 
In our study we use the stationary Matérn 5/2 kernel (\citealt{palar}).

We consider a design of experiments $\mathbb{X}_{\mathrm{train}} = (x^{1},\dots,x^{\ntrain})$ and the observations $Z_{\mathrm{train}} = (z_{i})_{i \in I_{\mathrm{train}}} = (t(x^{i}))_{i \in I_{\mathrm{train}}}$ with $I_{\mathrm{train}} = \{1,\dots, \ntrain\}$. $t$ can be approximated using the distribution of $Z(x) \mid Z(\mathbb{X}_{\mathrm{train}}) = Z_{\mathrm{train}}$, which is a Gaussian process with the following mean (\citealt{Rasmussen}): 
$$
 \hat{t}(x) = \mu(x) + k(x, \mathbb{X}_{\mathrm{train}})k(\mathbb{X}_{\mathrm{train}},\mathbb{X}_{\mathrm{train}})^{-1}(Z_{\mathrm{train}}-\mu(\mathbb{X}_{\mathrm{train}})) $$
where
$k(\mathbb{X}_{\mathrm{train}},\mathbb{X}_{\mathrm{train}}) = [k(x^{i},x^{j})]_{i,j \in I_{\mathrm{train}}}$ and $k(x, \mathbb{X}_{\mathrm{train}}) = [k(x,x^{i})]_{i \in I_{\mathrm{train}}}$. 
This method is applied to all the FPCA components
$i \in \{1,\dots, \npc\}$, for which the GP mean is a predictor $\hat{t}_{i}(x^{\star})$ of $t_{i}(x^{\star})$.
Then, $\hat{y}(x^{\star})$, a predictor of $y(x^{\star})$ is obtained by inverse FPCA as explained in Appendix \ref{inverse_fpca}.
The metamodel errors are investigated in the following sections.

 \section{Algorithm and performance metrics}\label{algoandperf}
 
 Sections \ref{background} and \ref{Metamodelisation} described how to estimate the expectations in Lloyd's algorithm and how to predict a large number of maps that cannot be computed with the time-consuming hydraulic simulators. 
This section details the overall scheme for calculating the prototype maps and their probability masses, and then explains how to evaluate its robustness.

 \subsection{Prototype Maps Algorithm}

The Prototype Maps Algorithm incorporates, inside Lloyd's algorithm, the metamodeling of spatial output and the importance sampling scheme. 

\begin{algorithm}[H]
\begin{algorithmic}[1]
\REQUIRE $(\Ytrue(x^{i}))_{i=1,\dots,\ntrain}$, $f_{X}$, $\biais$, minDistance, $\ell$
\STATE Sample $(\tilde{x}^{k})_{k \in \{1,\dots,\npred\}}$ i.i.d. of density function $\biais$
\STATE Compute $(\Ypred(\tilde{x}^{k}))_{1 \leq k \leq \npred}$ from $(\Ytrue(x^{i}))_{i=1,\dots,\ntrain}$ (GP \& FPCA)
\STATE Compute $(\frac{f_{X}(\tilde{x}^{k})}{\biais(\tilde{x}_{k})})_{1 \leq k \leq \npred}$
\STATE Initialize $\Gamm^{[0]} \gets \{\centro^{[0]}_{0},\dots,\centro^{[0]}_{\card}\} \in \Ecal^{\card}$

\WHILE{$\lVert \Gamm^{[k+1]} - \Gamm^{[k]}\rVert > \text{minDistance}$}
\STATEx{ $\centro^{[k+1]}_{j} \gets \estim_{\npred}(\Gamm^{[k]},j, \Ypred), j = 1,\dots,\card$}
\STATEx $k \gets k+1$
\ENDWHILE
\STATE $\Gammapred = \Gamm^{[k]}$
\STATE Compute $\hat{P}_{\tilde{n}}(\Gammapred,j, \Ypred), j = 1,\dots,\card$ 
\end{algorithmic}
\textbf{Output:} $\Gammapred$ and $\hat{P}_{\tilde{n}}(\Gammapred,j, \Ypred)$ 
\caption{Prototype Maps Algorithm}
\label{alg:ProtoMapsAlgo}
\end{algorithm}
The expression for $\hat{P}_{\tilde{n}}(\Gammapred,j, \Ypred)$ and $\estim_{\npred}(\Gamm^{[k]},j, \Ypred)$ stems from applying Equations~\eqref{estims_is} to the Gaussian Process prediction of the maps.
The calculation of $\hat{P}_{\npred}(\Gammapred,j,\Ypred)$ at step 8 could be extracted from the last iteration since it is the denominator of $\estim_{\ntrue}(\Gamm^{\star},j,\Ypred)$. 
However, these quantities are the estimators of the probabilities that a flood map is part of each Voronoi cell, for which we want to be as precise as possible, we thus re-compute $\hat{P}_{\tilde{n}}(\Gammapred,j,\Ypred)$ at the end with $\tilde{n} > \npred$.

The overall complexity of the Prototype Maps Algorithm is $\mathcal{O}((\tilde{n}+\npred)\ntrain^2\npc + \card s^2(\npred n_{\mathrm{it}} + \tilde{n}))$ with $n_{\mathrm{it}}$ the number of iterations. The calculation of the complexity is detailed in \citealt{sire_complex}.
The algorithm requires the storage of $(\Ypred(\tilde{X}^k))_{1 \leq k \leq \npred}$ of size $\npred s^2$ (with $s = 64$ here), which is a limiting factor.

If the map simulations are not expensive, the algorithm can of course be performed directly with the true maps,  $(\Ytrue(x^{i}))_{i=1,\dots,\ntrue}$, instead of the GP predictions.
minDistance $= 10^{-16}$ is chosen here in the applications.

\subsection{Performance metrics}

We expect that Lloyd's algorithm returns $\card$ prototype maps associated to the minimum quantization error, as well as $\card$ probabilities of maps to belong to each of the induced Voronoi cell. The errors will be evaluated both on the centroids and on each cluster probability. 
To ensure the robustness of the metrics,  we consider $(\Gamm^{r})_{r\in \{1,\dots,n_{\Gamm}\}}$ a family of $n_{\Gamm}$ different random quantizations chosen as random perturbations of a priori prototypes. 
Details about the random quantizations are provided in \citealt{sire_is}.  
We also introduce $\Gammaref$ the optimal quantization obtained with the real maps coming from the  simulators, and $\Gammapred$ the optimal quantization obtained with the predicted maps.

\paragraph{Metamodel performance metrics.} 
\label{perf_metrics_metamodel}

The following performance metrics quantify the errors caused by the metamodel (MM) on the quantization and on the cluster probability: 
\begin{itemize}
    \item The excess in quantization error due to the metamodel,
    \begin{equation}\label{error_mm_gamma}\epsilon_{\Gamma}^{MM} = \frac{\hat{e}(\Gammapred)-\hat{e}(\Gammaref)}{\hat{e}(\Gammaref)}~,
    \end{equation}where $\hat{e}(\Gamm) = \left(\frac{1}{\nerr} \sum_{k = 1}^{\nerr} \norm{\Ytrue(\tilde{X}^k) - \rep(\Ytrue(\tilde{X}^k))}^2\frac{f_{X}(\tilde{X}^k)}{\biais(\tilde{X}^{k})}\right)^{\frac{1}{2}}$ is an empirical quantization error.
We take $\nerr = 10^6$ here. 
This error can only be computed in the analytical case (see Section \ref{quant_camp}) as $\Gammaref$ can not be computed in the flooding case whose computational cost is too high.

    \item The relative probability error,
\begin{equation} 
  \erreurprobarelative(n,\Gamm^{r},j) = \frac{\mid  \hat{P}_{n}(\Gamm^{r},j,\Ytrue) - \hat{P}_{n}(\Gamm^{r},j,\Ypred) \mid}{\hat{P}_{n}(\Gamm^{r},j,\Ytrue)}~. 
\label{error_mm_proba}
  \end{equation}

computed for all $r = 1,\dots,n_{\Gamm}$ and $j = 1,\dots,\card$. 
Notice that the prototypes are identical in both probabilities, and only the effect of the metamodel on the cluster probability is assessed.
\end{itemize}

\paragraph{Importance sampling performance metrics.} 
\label{perf_metrics_is}

The effect of the importance sampling (IS) on the quantization is quantified, for a given Voronoi cell $\clust{\Gamm^r}{j}$, through two metrics:
\begin{itemize}
    \item The IS coefficient of variation of the membership probability, $\epsilon_P^{IS}(\tilde{n},\Gamm^r,j)$, which is an estimator of $\frac{\sqrt{\var\left(\hat{P}_{\tilde{n}}(\Gamm^r,j, \Ytrue)\right)}}{\mathbb{E}\left(\hat{P}_{\tilde{n}}(\Gamm^r,j, \Ytrue)\right)}$.
It measures an error due to the importance sampling in the estimation of the probability
of the Voronoi cell number $j$. .
    \item The IS centroid standard deviation, $\epsilon_{\Gamm}^{IS}(\ntrue,\Gamm^{r},j)$,
which is the $90\%$-quantile of the diagonal terms of the covariance matrix $\var\left(\estim_{\ntrue}(\Gamm^{r},j,\Ytrue)\right)$. 
It characterizes the error coming from importance sampling in the estimation of the centroid of the Voronoi cell number $j$. 
Among other things, it will show potential variations due to the denominator being close to zero in the centroid formula~\eqref{eq-centroid}.
\end{itemize}

These quantities are calculated for $r = 1,\dots,n_{\Gamm}$ and $j = 1,\dots,\card$ and their distributions serve as the metrics for the importance sampling performance. Details about the computation of these errors are given in \citealt{sire_is}.
 
 \section{Application to an analytical case}\label{campbell_sect}

\subsection{Campbell test case}

The performance of the method is first evaluated on an analytical test case, called the Campbell2D function (\citealt{Marrel}). 
Contrarily to realistic flooding simulators, a very high number of calls to the Campbell function are possible, 
which makes it an appropriate case for testing the Prototype Maps Algorithm.
This function has eight inputs ($d=8$) and a spatial map as output, i.e., its output is a function which depends on two inputs $(z = (z_{1},z_{2}))$ corresponding to spatial coordinates, 

\begin{align*}
\campbell\colon \;\;\;\;\;\;\;\;\;\; [-1,5]^{8} &\longrightarrow \mathbb{L}^{2}([-90,90]^{2}) \\
x = (x_{1},\dots, x_{8}) &\longmapsto \campbell_{x}(z)~,
\end{align*}
where $z = (z_{1}, z_{2}) \in [-90,90]^{2}$, $x_{j} \in [-1,5]$ for $j = 1,\dots,8$ and 
\begin{align*}
\campbell_{x}(z_{1}, z_{2}) &=    x_{1}\exp\left[-\frac{(0.8z_{1}+0.2z_{2}-10x_{2})^2}{60x_{1}^2}\right] \\
&+  (x_{2}+x_{4})\exp\left[\frac{(0.5z_1+0.5z_2)x_{1}}{500}\right] +  x_{5}(x_{3}-2)\exp\left[-\frac{(0.4z_{1}+0.6z_{2}-20x_{6})^2}{40x_{5}^2}\right] \\
&+ 
(x_{6}+x_{8})\exp\left[\frac{(0.3z_1+0.7z_2)x_{7}}{250}\right]
\end{align*}

\noindent Examples of Campbell maps are shown in Figure~\ref{examples_campbell}.

\FloatBarrier

\begin{figure*}[t!]
\centering
  \begin{subfigure}[t]{0.48\textwidth}
\includegraphics[width=1\linewidth]{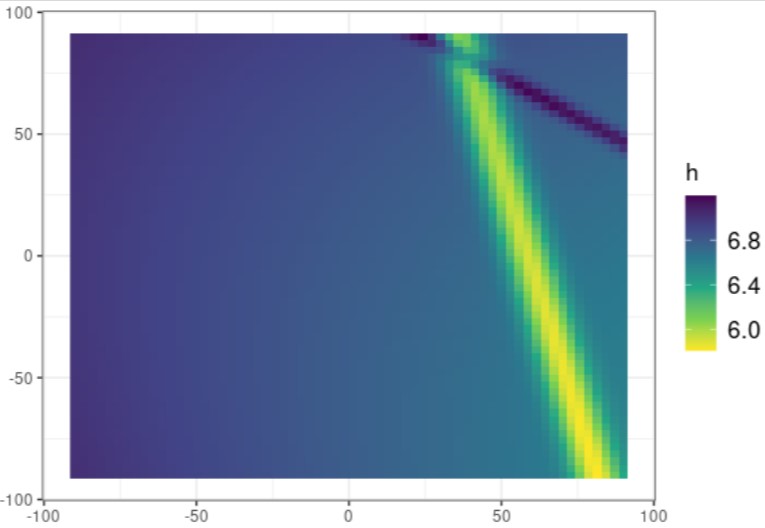}
\caption{}
 \label{example_campbell_1}
 \end{subfigure}%
 ~ 
\begin{subfigure}[t]{0.48\textwidth}
\includegraphics[width=1\linewidth]{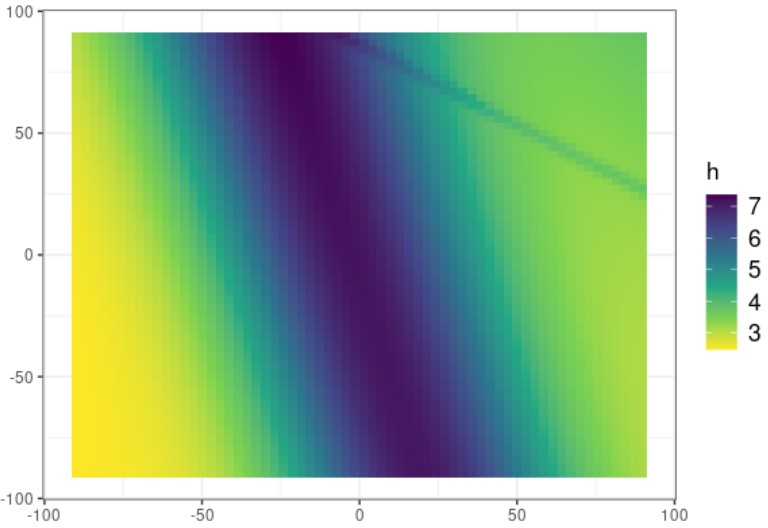}
\caption{}
\label{example_campbell_2}
  \end{subfigure}
  \caption{Two examples of Campbell maps. \ref{example_campbell_1} for $x = (-0.8,4.7,1,-0.1,-0.4,3.2,0.7,-1)$ and  \ref{example_campbell_2} for $x = (4.5,-0.3,3.5,1.4,0.3,2.6,-0.6,-1)$}
\label{examples_campbell}
\end{figure*}

\FloatBarrier

As in the forthcoming flooding case, the output space is $\Ecal = \mathcal{M}_{\dime, \dime}(\mathbb{R})$.
$f_{X}$, the density function of $X$, is the same as the density function of the input variables of the flooding application.
In order to have the same input domain as in the forthcoming flooding example, we impose $x_{8} = -1$ and we introduce an affine transformation, denoted $a$, from $\supp(f_{X})$ to $[-1,5]^{7}$, so that the input space is $\Dspace = \supp(f_{X}) \subset \mathbb{R}^7$. Details are given in Appendix \ref{probas_inputs}.
In the end, the true map function is

\begin{align*}
\Ytrue \colon \Dspace &\to \Ecal\\
x &\mapsto (\campbell_{(a(x)_{1},\dots, a(x)_{7}, -1)}(\mathbf{z}))_{z \in \mathcal{Z}} ~, 
\end{align*}

with $\mathcal{Z} = \{(-90 + \frac{i}{\dime}180,-90 + \frac{j}{\dime}180)\}_{i,j \in \{1,\dots,\dime\}^{2}}$.

The sampling density, $\biais$, is the same as in the flooding case and is described in Section~\ref{sec:flood_description}. The number of prototypes $\ell = 5$ is chosen a priori here but its selection is discussed in Section~\ref{perspectives}.

\subsection{Quantizing the Campbell function}\label{quant_camp}
\paragraph{Campbell metamodel errors.}

As explained is Section~\ref{perf_metrics_metamodel}, we gauge the performance of the empirical quantization with the two following metrics. 

First, the excess in quantization error due the metamodel, $\epsilon_{\Gamma}^{MM}$ 
(Equation \eqref{error_mm_gamma}), is equal to $3.6\%$ when evaluated from $\npred = 10^6$ maps.

Regarding the error on the probability estimators, we evaluate $\erreurprobarelative(10^5,\Gamm^{r},j)$ for all the cells $j = 1,\dots,\ell$ of all the tested quantization $\Gamm^{r}$, $r = 1,\dots,n_{\Gamm}$. Figure~\ref{proba_error_metamodel} illustrates the distribution of these errors. The results are really satisfying: the median and the 75th percentile are below $1\%$. 
The tests on the Campbell function thus show that the metamodel is sufficiently precise in terms of both the prototype maps and the computed probability mass of the Voronoi cells.

\begin{figure*}[t!]
\centering
  \begin{subfigure}[t]{0.3\textwidth}
\includegraphics[width=1\linewidth]{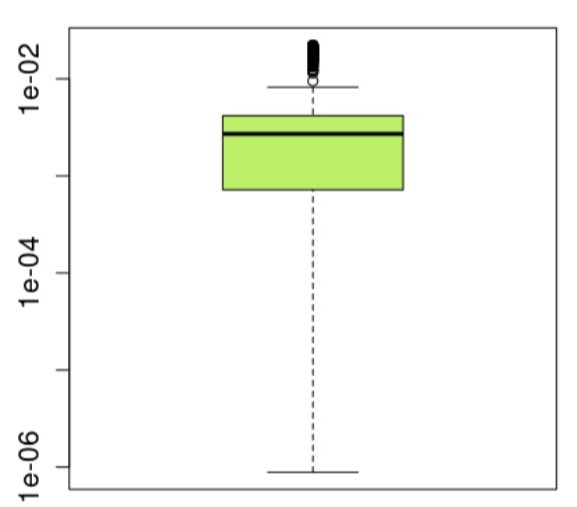}
\caption{Relative probability error}
 \label{proba_error_metamodel}
 \end{subfigure}%
 ~ 
\begin{subfigure}[t]{0.3\textwidth}
\includegraphics[width=1\linewidth]{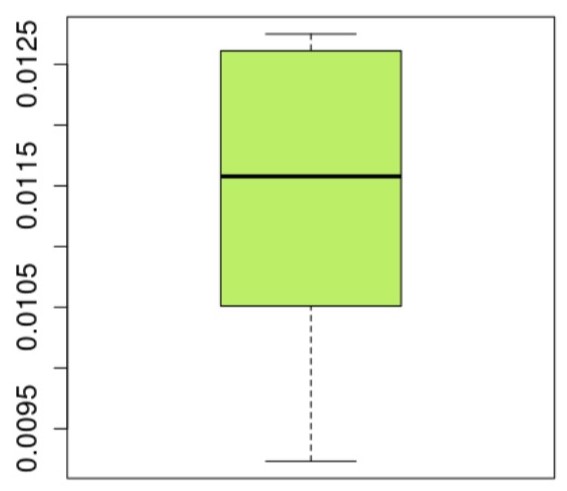}
\caption{IS relative std. deviation}
\label{proba_std_campbell}
  \end{subfigure}
 ~ 
\begin{subfigure}[t]{0.3\textwidth}
\includegraphics[width=1\linewidth]{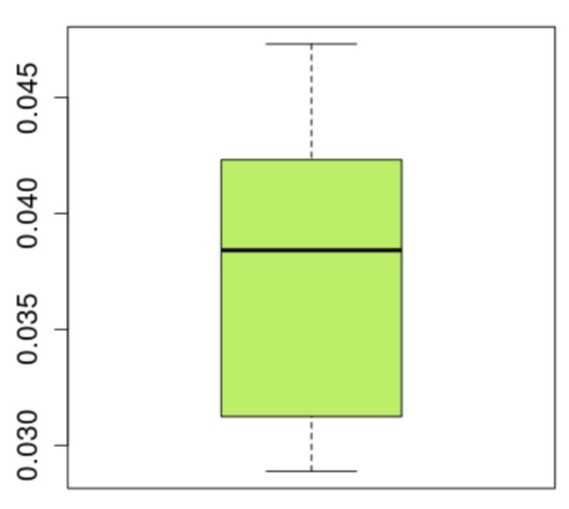}
\caption{IS centroid std. deviation}
\label{centroid_std_campbell}
  \end{subfigure}
  \caption{ 
Boxplots of the distributions of the various errors in the Campbell test case. The distributions correspond to all $r=1,\ldots,100\text{ and } j=1,\ldots,5.$
(a) $\erreurprobarelative(10^5,\Gamm^{r},j)$,  the relative probability error. 
(b) $\epsilon_P^{IS}(10^7,\Gamm^{r},j)$, the IS coefficient of variation of the membership probability. 
(c) $\epsilon_{\Gamm}^{IS}(10^6,\Gamm^{r},j)$, the IS centroid standard deviation.
}
\label{}
\end{figure*}

\paragraph{Importance sampling errors.}

As stated in Section~\ref{perf_metrics_is}, 
the errors related to the importance sampling scheme are measured through the IS coefficient of variation of the membership probability, $\epsilon_P^{IS}(\tilde{n},\Gamm^{r},j)$, and
the IS centroid standard deviation, $\epsilon_{\Gamm}^{IS}(\ntrue,\Gamm^{r},j)$, for all choices of prototypes $r = 1,\dots, n_{\Gamm}$ and 
all clusters $j = 1,\dots,\card$.

The distributions of these metrics are presented in Figures \ref{proba_std_campbell} and \ref{centroid_std_campbell} for $\tilde{n} = 10^7$
and $\ntrue = 10^6$. 
$\epsilon_P^{IS}(\tilde{n},\Gamm^{r},j)$ is around $1\%$, which is of the same order of magnitude as the error from the metamodel, and which is sufficiently accurate. 
$\epsilon_{\Gamm}^{IS}(\ntrue,\Gamm^{r},j)$ is around 0.04, which is also very precise.
These results validate the precision of the importance sampling estimators.

\paragraph{Campbell prototype maps.}

The outputs of the adapted Lloyd's algorithm are the $\card = 5$ representative maps $\Gammaref =\{\centro^{\star}_{1},\dots,\centro^{\star}_{5}\}$ drawn in Figure~\ref{centroids_campbell} which were obtained by using the true maps. 
The probability mass of each Voronoi cell is given below each map with two estimates, one calculated with the true maps and the other with the metamodeled maps. 
The closeness of the results confirms the precision of the metamodel-based approach. 
The same broad backslash-like pattern runs across the 5 maps, which differ by their $h$ levels. 
It is explained by the density function $f_{X}$ illustrated in Appendix \ref{probas_inputs}, that concentrates $x_{2},x_{4}$ and $x_{5}$ in specific zones that boost the occurence of this pattern.

\begin{figure}
\centering
\begin{subfigure}{.32\textwidth}
  \centering
  \includegraphics[width=.8\linewidth]{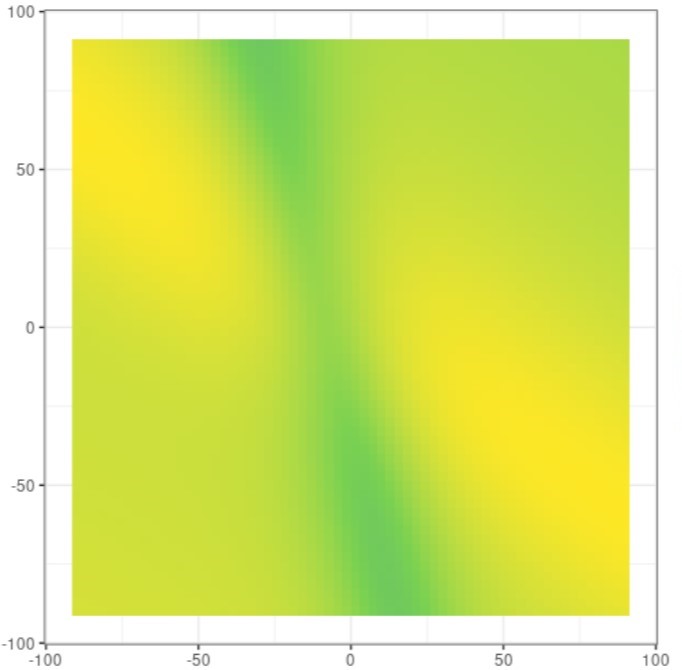}  
  \caption{\scriptsize\color{white}{aa}\color{black}$\hat{P}_{\tilde{n}}(\Gammaref,1,\Ytrue) = 6.6\times10^{-2}\\$
  \color{white}{aaaaaa}\color{black}$\hat{P}_{\tilde{n}}(\Gammaref,1,\Ypred) = 6.2\times10^{-2}$
  }
\end{subfigure}
\begin{subfigure}{.32\textwidth}
  \centering
  \includegraphics[width=.8\linewidth]{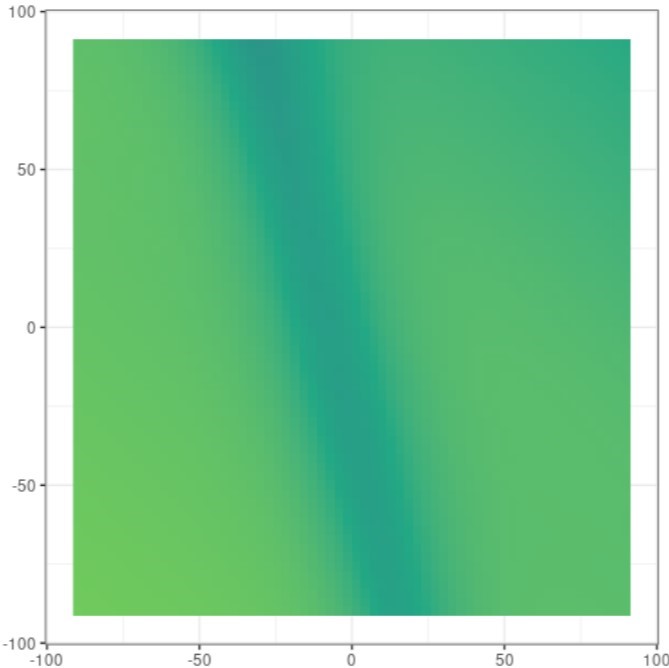}  
  \caption{\scriptsize\color{white}{aa}\color{black}$\hat{P}_{\tilde{n}}(\Gammaref,1,\Ytrue) = 2.2\times10^{-1}\\$
  \color{white}{aaaaaa}\color{black}$\hat{P}_{\tilde{n}}(\Gammaref,1,\Ypred) = 2.2\times10^{-1}$
  }
\end{subfigure}
\begin{subfigure}{.32\textwidth}
  \centering
  \includegraphics[width=.8\linewidth]{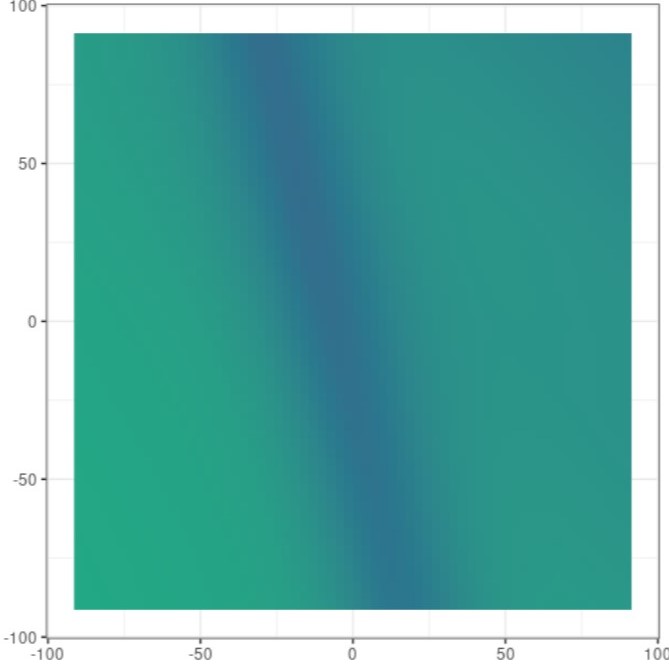}  
\caption{\scriptsize\color{white}{aa}\color{black}$\hat{P}_{\tilde{n}}(\Gammaref,1,\Ytrue) = 2.7\times10^{-1}\\$
  \color{white}{aaaaaa}\color{black}$\hat{P}_{\tilde{n}}(\Gammaref,1,\Ypred) = 2.7\times10^{-1}$
  }
\end{subfigure}

\centering
\begin{subfigure}{.32\textwidth}
  \centering
  \includegraphics[width=.8\linewidth]{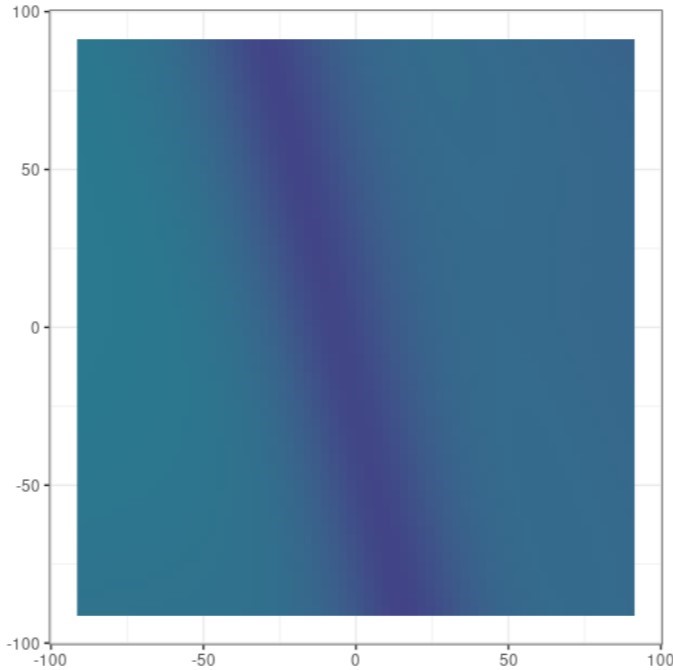}  
    \caption{\scriptsize\color{white}{aa}\color{black}$\hat{P}_{\tilde{n}}(\Gammaref,1,\Ytrue) = 2.6\times10^{-1}\\$
  \color{white}{aaaaaa}\color{black}$\hat{P}_{\tilde{n}}(\Gammaref,1,\Ypred) = 2.6\times10^{-1}$
  }
\end{subfigure}
\begin{subfigure}{.41\textwidth}
  \centering
  \includegraphics[width=.8\linewidth]{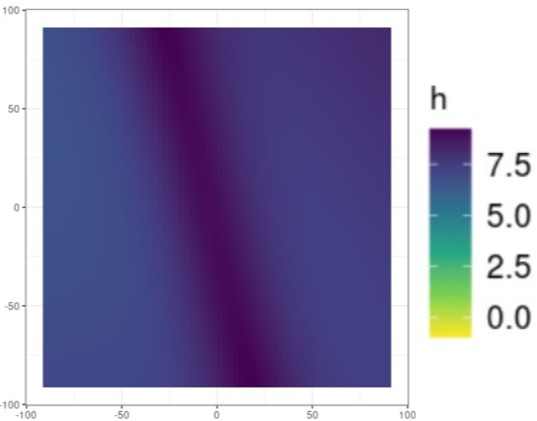}  
    \caption{\scriptsize\color{white}{aa}\color{black}$\hat{P}_{\tilde{n}}(\Gammaref,1,\Ytrue) = 2.0\times10^{-1}\\$
  \color{white}{aaaaaa}\color{black}$\hat{P}_{\tilde{n}}(\Gammaref,1,\Ypred) = 2.0\times10^{-1}$
  }
\end{subfigure}
\caption{The 5 prototype maps obtained from $\ntrue = 10^6$
true maps (without metamodel) in the Campbell case, and their probability masses estimated with $\tilde{n} = 10^7$ true maps (top of subcaptions) and estimated with the predicted maps (bottom of the subcaptions).} 
\label{centroids_campbell}
\end{figure}

\section{Application to flooding}\label{flooding_sect} 

\subsection{Description of the flooding test case}
\label{sec:flood_description}

\paragraph{Parameterization.}
The main application of this article is about the Boucholeurs district located 
on the French Atlantic Coast (Figure~\ref{map_boucholeurs}) near the city of La Rochelle. 
This site was hit by the Xynthia storm in February 2010, which caused the inundation of several areas and severe human and economic damage (\citealt{xynthia}). 

The main coastal flooding processes correspond to overflow and are simulated with the numerical code MARS (\citealt{Lazure}), where adaptations were made by the BRGM to take into account specificities of local water circulation flows around connections (nozzles, spillways, etc. and breaching phenomena). 
We focus on the interplay between tide and storm surge on the spatial distribution of the maximum water depth after flooding in the urban area.
As shown in the Figure~\ref{tidesurge}, the tide is simplified and assumed to be represented by a sinusoidal signal, parameterized by the high-tide level $T$. 
The surge signal is modeled as a triangular function defined by four parameters: the peak amplitude $S$, the phase difference between the surge peak and the high tide $t_{0}$, the time duration of the rising part $t_{-}$ and the time duration of the falling part $t_{+}$. 
In total, 5 variables describe the offshore conditions, $X=(T, S, t_0,t_-,t_+)$.

Each of these input variables is assumed to be independent and described by an empirically defined probability distribution based on the analysis performed by \citealt{Rohmer} of tide gauge measurements from Ile d'Aix \footnote{data available on \url{data.shom.fr}} (the tide gauge closest to the study site) and La Rochelle-La Pallice. 
A constant surge level offset (chosen at 65~cm i.e., compatible with future sea level rise in this region \footnote{see e.g. \url{https://sealevel.nasa.gov/data_tools/17})}) is added to its empirical distribution to ensure that a sufficient number of flooding events are induced. Appendix \ref{probas_inputs} shows the obtained probability distributions. 

Regarding the variables representing a possible breach, two characteristics are taken into account.
First, the location of the breaches. Ten different locations are selected. Six of them (B1-B6, Figure~\ref{map_boucholeurs}) are related to natural dunes and are selected based on observations made during the Xynthia event (\citealt{Muller}). Four other locations (B7-B10) are related to artificial dykes and are added because they are near vulnerable zones.
The other breach-related variable is the topographic level after failure, parametrized as the fraction $p$ (erosion rate) of the initial crest level left.

The distribution of the breach characteristics (location and erosion rate) depends on the offshore conditions. 
If the maximum of the sum of the signals (tide and surge) over time is higher than $70\%$ of the embankment's height, the probability of breaching is $0.5$. 
Vice versa, the probability of breaching is assumed to be $10^{-4}$. 
Finally, we consider that if there is a breach, then the location and the erosion rate follow a uniform distribution $\mathcal{U}_{\{1,\dots,10\}}$ and $\mathcal{U}_{]0,1]}$, respectively. 
This breach model is a compromise between simplicity and realism.

\begin{figure*}[t!]
\centering
  \begin{subfigure}[t]{0.62\textwidth}
\includegraphics[width=1\linewidth]{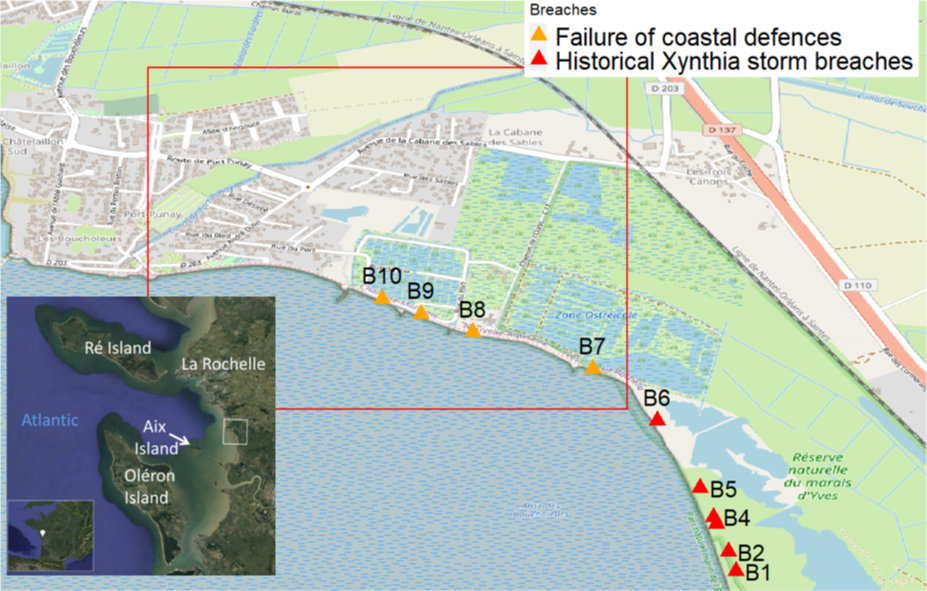}
\caption{Study site of Les Boucholeurs on the Atlantic coast in France, with the locations of the breaches (triangle symbols) (\citealt{Rohmer}). The red rectangle represents the investigated zone where quantization is performed.}
 \label{map_boucholeurs}
 \end{subfigure}
 ~ 
\begin{subfigure}[t]{0.34\textwidth}
\includegraphics[width=1\linewidth]{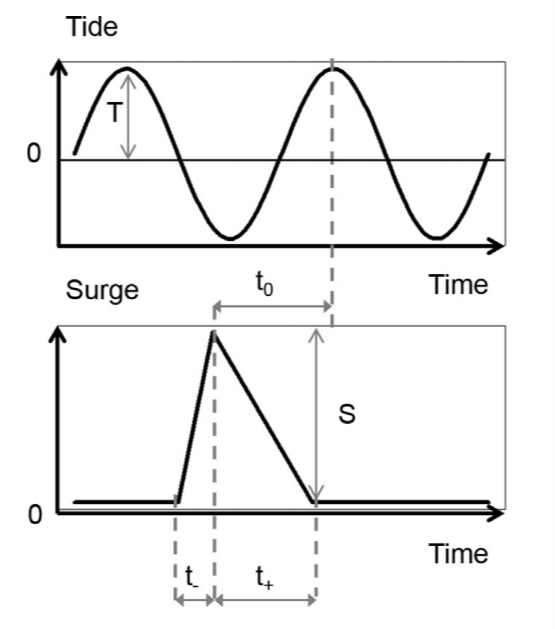}
\caption{Parameterization of the tide and surge temporal signals (see \citealt{Rohmer}).}
\label{tidesurge}
  \end{subfigure}
  \caption{}
\end{figure*}

\paragraph{Data set generation.}
Our study relies on 1300 simulations of flood maps generated as follows: 
\begin{itemize}
    \item The offshore conditions are sampled as the beginning of the Sobol Sequence in 5 dimensions (\citealt{Cheng}).
    \item 500 of the 1300 simulations are run without breach, and the breach parameters of the 800 others are sampled uniformly 
according to $\mathcal{U}_{\{1,\dots,10\}}$ for a single breach location and $\mathcal{U}_{[0,1]}$ for the associated erosion rate.
\end{itemize}

These 1300 maps form the training database, and each map belongs to the space $\Ecal = \mathcal{M}_{\dime, \dime}(\mathbb{R})$. 
The inner product $\langle .,.\rangle_{\Ecal}$ of Theorem \ref{pointfixe} is such that all $\lambda_i$'s are set to 1 so that $\forall (y,y') \in \Ecal^{2}, \langle y,y'\rangle_{\Ecal} = \sum_{i=1}^{\dime}\sum_{j=1}^{\dime} y_{i,j}y'_{i,j}$.  

\paragraph{IS sampling density.}
The biased density function is $\biais = \biais_{1}\times\dots\times \biais_{7}$ with $\biais_{i} = \frac{\mathds{1}_{I_{i}}}{\int_{I_{i}}dx}, i = 1,\dots,5$, where $I_i$ is the support of the marginal density function of the $i^{\mathrm{th}}$ input variable (cf. Appendix~\ref{probas_inputs}),
$\biais_{6} = \frac{1}{10}\sum_{k=1}^{10}\delta_{k}$ and $\biais_{7} = \frac{5}{13}  \delta_{0} + (\frac{8}{13})\mathds{1}_{]0,1]}$.\\
The idea is to choose a single density for all the estimators related to all the Voronoi cells. 

The optimal biased density for which the IS estimators have zero variance is theoretically known (\citealt{Ecuyer}) to be a normalized version of 
$\mathds{1}_{\yinclust{\Ytrue(\tilde{x})}{\Gamm}{j}} f_{X}(\tilde{x})$ or $y(\tilde{x})\mathds{1}_{\yinclust{\Ytrue(\tilde{x})}{\Gamm}{j}} f_{X}(\tilde{x})$.
It changes with each cell $j$, affecting a non-zero probability mass only to the cell. 
It is not practical as it requires the knowledge of the result of the estimation (probability or mean map) for the normalization. 
The choice of a uniform distribution for the variables related to the offshore conditions is an approximation to these optimal densities that doesn't depend on the cell. The uniform density is relevant for quantization as it provides maps covering the whole domain irrespective of their occurrence probability.

\paragraph{Dealing with the concentration of empty maps.}
One of the main difference between the Campbell function and the flooding case is a Dirac mass in the output distribution. 
Indeed, the probability that the map is free of water ($0$ at every pixel of the matrix) is close to 1. 
As Gaussian processes are not adapted to predict empty maps, their direct use in the Prototype Maps Algorithm would overestimate the flooding probability.
As a counter-measure, a random forest  classification step is added (\citealt{Ayyadevara}), creating a hurdle model (\citealt{McDowell}): 

\begin{itemize}
    \item A first step predicts if the water level is below a threshold $\thresh$ (it is a parameter to tune) or not with the classification method.
    \item If the water level is forecasted to be above $\thresh$, 
then use the spatial metamodel described in Section~\ref{Metamodelisation}. 
Otherwise, the predicted map is empty.
\end{itemize}
A final post-processing step is to convert all negative water depth values into null values. 

Another difference with the Campbell case is the existence of two behaviours with or without breach. 
Two metamodels are built independently for each of these two situations.

\paragraph{Initializing Lloyd.}

Another implementation aspect is the choice of the initial prototype maps at the beginning of Lloyd's algorithm (Step 4, Algorithm~\ref{alg:ProtoMapsAlgo}). 
Our a priori is that the future Voronoi cells will be associated with diverse flooding levels. 
The Prototype Maps Algorithm is first run with only the 1300 realistic maps instead of the $\ntrue=10^6$ metamodeled maps. But this degraded version of the algorithm, which provides the initial centroids, also needs to be initialized, 
which is done with the maps associated to the $\card$ equally spaced quantiles of the training water volumes.

\subsection{Quantizing floodings}

\paragraph{Flooding metamodel error.}

As presented in Section~\ref{perf_metrics_metamodel}, to evaluate the impact of the metamodel on the precision of the probability estimators, we compute the relative probability error, $\erreurprobarelative(1300,\Gamm^{r},j)$. 
Note that, unlike the analytical Campbell function, the excess in quantization error, $\epsilon_{\Gamma}^{MM}$, cannot be calculated since the true centroids are not known.

The calculation of the relative probability error is composed of the following steps:

\begin{itemize}
    \item The flood predictions,
$(\Ypred(\tilde{X}^{(j)}))_{1 \leq j \leq 1300}$, are computed from the  $(\Ytrue(\tilde{X}^{(j)}))_{1 \leq j \leq 1300}$ 
 hydraulic simulations within a 10-fold cross-validation procedure. 
Each of the fold involves a classification and a metamodeling from 1170 maps followed by a prediction of 130 maps.
    \item The probabilities  $\hat{P}_{1300}(\Gamm,j,\Ytrue)$ and $\hat{P}_{1300}(\Gamm,j,\Ypred)$, $j \in \{1,\dots,5\}$, are computed 
for the perturbed centroids (generated as explained in Section~\ref{perf_metrics_metamodel}) $\Gamm \in (\Gamm^{r})_{r\in \{1,\dots,n_{\Gamm}\}}$.
\item The relative probability error  $\erreurprobarelative(1300,\Gamm,j)$ is computed following Equation~\eqref{error_mm_proba} for $\Gamm \in (\Gamm^{r})_{r\in \{1,\dots,n_{\Gamm}\}}$ and $j = 1,\dots,5$.
\end{itemize}

The distribution of $\erreurprobarelative(1300,\Gamm^{r},j)$, the relative errors, is shown in Figure~\ref{proba_error_metamodel_flooding} for $n_{\Gamm} = \nbgamma$. 
The median of the error is around $2.5\%$. 
It reveals that the FPCA and the Gaussian process together are effective in predicting the Voronoi cell in which a flood map is located, i.e., in predicting the type of flooding of a new input vector.  This performance is obtained with only two FPCA basis maps that explain $98\%$ of the variance, see Figure \ref{bases_sr}, which highlights the low intrinsic dimension of the phenomenon.

\begin{figure*}[t!]
\centering
  \begin{subfigure}[t]{0.3\textwidth}
\includegraphics[width=1\linewidth]{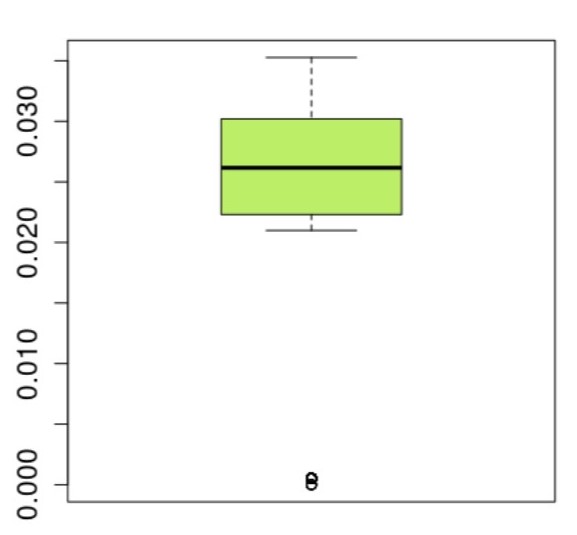}
\caption{Relative probability error}
 \label{proba_error_metamodel_flooding}
 \end{subfigure}%
 ~ 
\begin{subfigure}[t]{0.3\textwidth}
\includegraphics[width=1\linewidth]{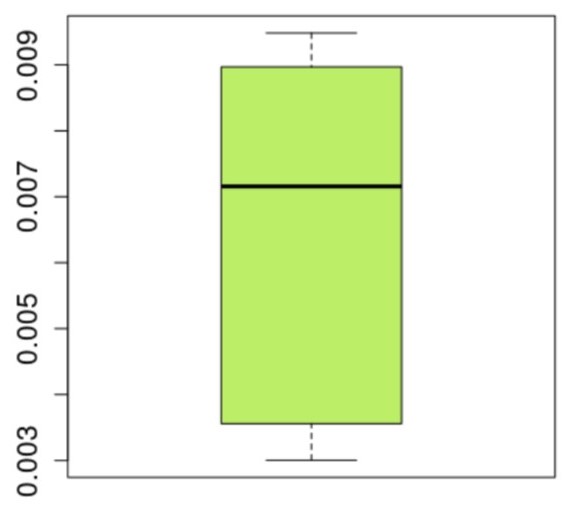}
\caption{IS relative std. deviation}
\label{std_flooding}
  \end{subfigure}
  ~ 
\begin{subfigure}[t]{0.312\textwidth}
\includegraphics[width=1\linewidth]{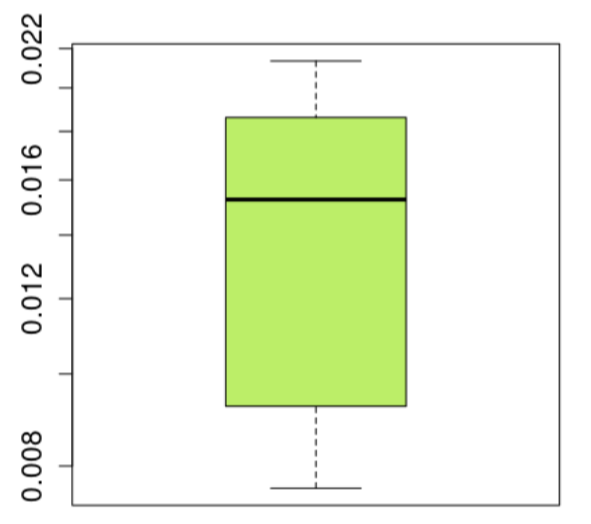}
\caption{IS centroid std. deviation}
\label{centroid_flooding}
  \end{subfigure}
  \caption{ 
Boxplots of the distributions of the various errors in the flooding case. The distributions correspond to all $r=1,\ldots,100\text{ and } j=1,\ldots,5.$
(a) $\erreurprobarelative(1300,\Gamm^{r},j)$,  the relative probability error. 
(b) $\epsilon_P^{IS}(10^7,\Gamm^{r},j)$, the IS coefficient of variation of membership probability. 
(c) $\epsilon_{\Gamm}^{IS}(10^6,\Gamm^{r},j)$, the IS centroid standard deviation.
}
\end{figure*}

\paragraph{Flooding importance sampling error.}

We complete this study by an analysis of the importance sampling error as explained in \citealt{sire_is}. 
A bootstrap approach provides, for each of the $j=1,\ldots,\card$ centroids of each $\Gamm^r$, the IS coefficient of variation of the membership probability, $\epsilon_P^{IS}(\tilde{n},\Gamm^{r},j)$, and the IS centroid standard deviation, $\epsilon_{\Gamm}^{IS}(\ntrue,\Gamm^{r},j)$.
The distributions of these metrics are given in Figure ~\ref{std_flooding} and \ref{centroid_flooding} for $\tilde{n} = 10^7$ and $\ntrue = 10^6$. $\epsilon_P^{IS}(\tilde{n},\Gamm^{r},j)$ is lower than $1\%$, showing that the error on the probability is mainly driven by the error of the metamodel. 
$\epsilon_{\Gamm}^{IS}(\ntrue,\Gamm^{r},j)$ is less than $0.02$~meters for $\ntrue=10^6$ which is quite satisfying in a flooding context.

\paragraph{Flooding prototype maps.} 

The application of the Prototype Maps Algorithm to Les Boucholeurs with $\ntrue = 10^6$ gives the $5$ representative maps of Figure~\ref{rep_maps}.
Each of them represents a type of flooding. The probability masses are computed with $\tilde{n} = 10^7$ to increase precision, 
which was not possible in the Lloyd's algorithm because of the computational cost of the multiple iterations. 
The most serious the flooding, the lowest the probability, except for the Voronoi cell number 3  (Figure~\ref{gamma_star3}) which is a bit less likely than Voronoi cell number 4 (Figure~\ref{gamma_star4}), even if it is associated to more moderate floodings. 
Although this seems counter-intuitive,  it is possible that a specific type of small flooding  is less likely than another type of more severe flooding. 
Among the Voronoi cells associated to the strongest floodings, the prototype number 4 (Figure~\ref{gamma_star4}) is a map with shallow water covering a large area, while the flooded area is smaller for cell number 5 (Figure~\ref{gamma_star5}) but with a greater water depth near the coastline.
These 5 Voronoi cells can be visualized on the two first FPCA axes that summarize 98\% of the inertia of the 1300 flooding maps. 
Figure~\ref{fig-fpcaFloodVisu} illustrates this (and Appendix~\ref{axes_fpca} discusses these results more thoroughly), showing that the fourth and fifth cells associated to the most extreme floodings contains very different types of maps.

\begin{figure}[htp]
\centering

\begin{subfigure}{0.53\columnwidth}
\centering
\includegraphics[width=0.99\textwidth]{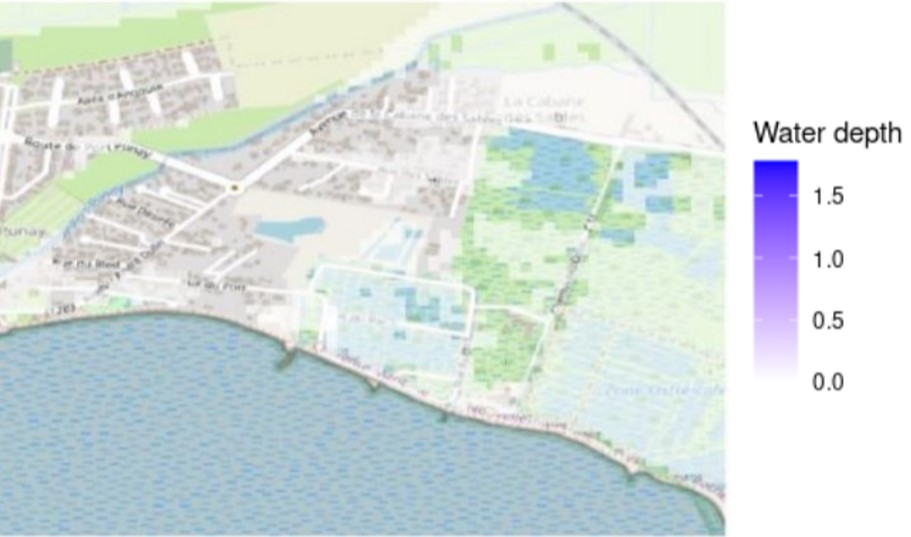}
\subcaption{\small$\hat{P}_{\tilde{n}}(\Gammapred,1,\Ypred) = 9.7\times10^{-1}$, 1 in 1.03}
\label{gamma_star1}
\end{subfigure}
\linebreak
\begin{subfigure}{0.43\columnwidth}
\centering
\includegraphics[width=0.99\textwidth]{gamma_star2.jpg}
\subcaption{\small$\hat{P}_{\tilde{n}}(\Gammapred,2,\Ypred) = 2.0\times10^{-2}$, 1 in 50}
\label{gamma_star2}
\end{subfigure}
\begin{subfigure}{0.43\columnwidth}
\centering
\includegraphics[width=0.99\textwidth]{gamma_star3.jpg}
\subcaption{\small$\hat{P}_{\tilde{n}}(\Gammapred,3,\Ypred) = 5.2\times10^{-3}$, 1 in 192}
\label{gamma_star3}
\end{subfigure}\hfill
\medskip
\begin{subfigure}{0.43\columnwidth}
\centering
\includegraphics[width=0.99\textwidth]{gamma_star4.jpg}\hfill
\subcaption{\small$\hat{P}_{\tilde{n}}(\Gammapred,4,\Ypred) = 6.2\times10^{-3}$, 1 in 161}
\label{gamma_star4}
\end{subfigure}
\begin{subfigure}{0.43\columnwidth}
\centering
\includegraphics[width=0.99\textwidth]{gamma_star5.jpg}
\subcaption{\small$\hat{P}_{\tilde{n}}(\Gammapred,5,\Ypred) = 1.4\times10^{-3}$, 1 in 714}
\label{gamma_star5}
\end{subfigure}
\caption{Centroids calculated by the Prototype Maps Algorithm with $\ntrue = 10^6$ predicted maps, and their probability mass estimated with $\tilde{n} = 10^7$. The frequency, i.e., the inverse of the probability is provided for an easier interpretation.
The water depths (in meters) are drawn with colors ranging from light to dark blue. Blue contour lines represent a water depth of 15~cm.}

\label{rep_maps}

\end{figure}

\section{Summary and perspectives}  \label{perspectives}

This work proposes a new visualization method based on quantization 
adapted to rare events and for which a limited database is available, as it occurs with computationally costly simulations.
Without loss of generality, the method, called Prototype Maps Algorithm, is developed around an application to flood maps.
The visualization takes the form of
a set of prototype maps and their probability masses
that best represent the continuous probability distribution on the space of maps. 
Importance sampling and a metamodel scheme based on Functional PCA combined with Gaussian process regression are used to 
address the challenges of the scarcity of the (flooding) events and the small data set size. 
 The obtained prototype maps along with their probabilities are an intuitive representation of the flooding risk while remaining meaningful mathematically. 
Indeed, the prototype maps are a discrete approximation of the continuous random flooding event. 
The distance between the discrete and the continuous distribution is the quantization error, 
which is easily interpretable as it is the expected squared distance between a random map and its closest prototype map.

Performance metrics used in the article validate the precision of our method on an analytical test problem and a real flooding case.

The following developments are worthwhile, 
the first being related to the tuning of the algorithm, the others dealing with complementary approaches.

\paragraph{Tuning of the algorithm.}

First, a more sophisticated biased density function of importance sampling could be sought. 
We have integrated importance sampling inside Lloyd's algorithm to deal with the low flooding probability when estimating the centroid and the probabilities of each cluster.
The choice of a uniform density $\biais$ proposed here for the offshore conditions is a compromise between simplicity and the exploration of the whole domain and was empirically seen to keep the variance of the estimators small.
Yet a real optimization of this density has not been performed. 
While it is a difficult problem with functional unknowns, the search for a better biased density, defined independently for each cell, is a relevant continuation to this work.

Second, the number of prototype maps, $\card$, is chosen a priori to remain usable in practice while capturing sufficiently diverse floodings.
    Strategies to identify the number of Voronoi cells could be implemented based, for instance, on the evolution of the quantization error with the number of representatives. 

\paragraph{Complementary approaches.}

The method described here characterizes representative floodings. 
It would now be valuable to know more about the forcing and breaching conditions leading to these scenarios, i.e., to study how the inputs are related to each prototype.

An inherent feature of the obtained quantization is that none of the prototype maps is the no-flood map. The centroid of the first Voronoi cell has a negligible water depth ($3\times10^{-5}$ meters on average), but it is not really empty. 
However, fortunately, it is known that the probability that a flood map is empty is close to 1. 
The formulation of quantization could then be adapted so as to force the no-flood map (or more generally a Dirac of a continuous probability law) to be one of the representatives.

In this article, the metamodel is the kriging mean. The uncertainties of the Gaussian process were not considered. 
Another future development could account for this metamodel uncertainty in the computation of the conditional expectations and the probabilities.

\FloatBarrier

\section*{Acknowledgement}

This research was conducted with the support of IRSN and BRGM, through the consortium in Applied Mathematics CIROQUO (\url{https://doi.org/10.5281/zenodo.6581217}), gathering partners in technological and academia in the development of advanced methods for Computer Experiments. 

\section*{Declaration of competing interest}

The authors declare that they have no known competing financial interests or personal relationships that could have appeared to
influence the work reported in this paper.

\nomenclature[01]{$\centro_{i}$}{Centroid of the Voronoi cell number $i$}
\nomenclature[02]{$\Gamm$}{A set of $\card$ prototype maps}
\nomenclature[02]{$\Gammaref$}{Optimal quantization obtained with the true maps (without metamodel)}
\nomenclature[03]{$\Gammapred$}{Optimal quantization obtained with the predicted maps}
\nomenclature[04]{$(\Gamm^{r})_{r\in \{1,\dots,n_{\Gamm}\}}$}{Family of $n_{\Gamm}$ prototype perturbations sampled to evaluate the errors}
\nomenclature[07]{$\hat{e}(\Gamm)$}{Empirical quantization error associated to $\Gamm$}
\nomenclature[08]{$\epsilon_{\Gamm}^{IS}(\ntrue,\Gamm^{r},j)$}{IS centroid standard deviation}
\nomenclature[09]{$\epsilon_P^{IS}(\tilde{n},\Gamm^{r},j)$}{IS coefficient of variation of the membership probability}
\nomenclature[10]{$\epsilon_{\Gamm}^{MM}$}{Excess in quantization error}
\nomenclature[11]{$\erreurprobarelative(n,\Gamm^{r},j)$}{Relative probability error}
\nomenclature[12]{$\estim_{n}(\Gamm,j,Y)$}{Estimator of the centroid of the $j$th Voronoi cell of $\Gamm$ computed with $n$ maps with the output function $Y$}
\nomenclature[12]{GP}{Gaussian Process}
\nomenclature[13]{$\biais$}{Biased density used in the importance sampling scheme}
\nomenclature[13]{$\card$}{Number of prototype maps}
\nomenclature[14]{$\tilde{n}$}{Number of maps sampled to compute the probabilities associated to the optimal quantization at the end of the Prototype Maps Algorithm}
\nomenclature[15]{$n_{\Gamm}$}{Number of prototypes perturbations to evaluate the errors metrics.}
\nomenclature[16]{$\nerr$}{Number of maps sampled to compute the empirical quantization error}
\nomenclature[17]{$n_{\mathrm{it}}$}{Number of iterations in the Prototype Maps Algorithm}
\nomenclature[18]{$\ntrue$}{Number of maps sampled for the Prototype Maps Algorithm iterations}
\nomenclature[19]{$\ntrain$}{Number of maps in the training database}
\nomenclature[20]{$\nvar$}{Number of maps sampled to compute the variance of the estimators (centroids and probabilities).}
\nomenclature[21]{$\hat{P}_{n}(\Gamm,j,Y)$}{Estimator of the probability mass of the $j$th Voronoi cell of $\Gamm$ computed with $n$ maps with the output function $Y$}
\nomenclature[22]{$s^2$}{Number of pixels of a map}
\nomenclature[22.2]{$\supp()$}{Support of a density}
\nomenclature[24.2]{$Y$}{Random output function introduced in the general formulation}
\nomenclature[23]{$\Dspace$}{Input space, for instance offshore and breaching conditions in the coastal case}
\nomenclature[23.2]{$X$}{Random inputs}
\nomenclature[23.3]{$\tilde X$}{Random inputs following the importance sampling density $\biais$}
\nomenclature[23.4]{$x$}{A vector of inputs, i.e. a realization of the random inputs}
\nomenclature[24.4]{$\Ytrue$}{True maps function (deterministic)}
\nomenclature[24.6]{$\Ypred$}{Predicted maps function (deterministic)}
\nomenclature[24]{$\Ecal$}{Output space of maps}
\nomenclature[05]{$\clust{\Gamm}{i}$}{Voronoi cell number $i$ associated to the quantization $\Gamm$}
\nomenclature[12]{FPCA}{Function Principal Components Analysis}
\nomenclature[13]{IS}{Importance Sampling}

\printnomenclature

\bigskip
\begin{center}
{\large\bf SUPPLEMENTARY MATERIAL}
\end{center}

\begin{description}

\item[Supplementary material description:] Note describing the git repository (readme.md).

\item[Codes related to the Campbell2D case:] Git repository containing R notebooks to reproduce all the experiments related to the Campbell2D function that are described in the article. (\url{https://github.com/charliesire/quantization_Campbell2D.git})

\item[R package FunQuant:] R package containing functions to perform the Prototype Maps Algorithm and compute the associated performance metrics. (\url{https://github.com/charliesire/FunQuant.git})

\item[Importance sampling performance metrics:] Note providing details about the computation of the importance sampling performance metrics.(\texttt{is\_perf\_metrics.pdf}).

\item[Complexity of the Prototype Maps Algorithm:] Note explaining the computation of the complexity of the Prototype Maps Algorithm (\texttt{complexity\_pma.pdf}).

\end{description}

\bibliographystyle{Chicago}

\bibliography{Quantizing_rare_random_maps}

\appendix

\section{Proof of theorem \ref{pointfixe}}\label{proof_theorem}

In this Appendix, we generalize the proof of optimal quantization of Kieffer and Cuesta-Albertos (\citealt{Pages2} ) to random fields $Y$.
Let us consider a probability space $(\Omega_{0}, \mathcal{F}_{0}, \mathbb{P})$, a random variable $Y \colon \Omega_{0} \to \Ecal$ and the optimal quantization of $\card$ prototype maps $\Gamm^{\star} = \{\centro^{\star}_{1},\dots, \centro^{\star}_{\card}\} \in \underset{\Gamm \in \Ecal^{\card}}{\argmin}\: (e(\Gamm))$.

We suppose that
\begin{itemize}
    \item $\Ecal$ is of finite dimension $m$~,
    \item $\forall z,y \in \Ecal,\scalprodE{z}{y} = \sum_{i = 1}^{m} \lambda_{i}z_{i}y_{i}$ with $\forall i, \lambda_{i} > 0$~,
    \item and $Y \in L^{2}_{\Ecal}(\mathcal{F}_{0}) =  \{W  \colon \Omega_{0} \to \Ecal \in \mathcal{F}_{0}, \Esp{\norm{W}^{2}} < +\infty\}$~.
\end{itemize}
Let us show that ~ $\forall i \in \{1\dots\card\}, \quad \Esp{Y\mid \yinclust{Y}{\Gamm^{\star}}{i}} = \centro^{\star}_{i}$.

We introduce the inner product of the space  $L^{2}_{\Ecal}(\mathcal{F}_{0})$: $\langle W,Y \rangle = \scalprodomega{W}{Y}, \: W,Y \in L^{2}_{\Ecal}(\mathcal{F}_{0})$.
 We set $\mathcal{F} = \sigma(\rep(Y)) \subset \mathcal{F}_{0}$ the $\sigma$-algebra induced by $\rep(Y)$.
\begin{proposition}\label{proj_ortho}
Let $W \in L^{2}_{\Ecal}(\mathcal{F}_{0})$, 
$\pi_{L^{2}(\mathcal{F})}(W) = \Esp{W\mid \mathcal{F}}$, with $\pi_{L^{2}(\mathcal{F})}(W)$ the orthogonal projection of $W$ on $L^{2}(\mathcal{F})$ (\citealt[pp. 229]{Durret})
\end{proposition}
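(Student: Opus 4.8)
The plan is to recognize Proposition~\ref{proj_ortho} as the vector-valued analogue of the classical identification of conditional expectation with orthogonal projection in $L^2$ (as in the cited Durrett reference), and to reduce it to that scalar statement coordinatewise. First I would verify that $L^2(\mathcal{F})$ is a closed linear subspace of the Hilbert space $\big(L^2_{\Ecal}(\mathcal{F}_0), \langle\cdot,\cdot\rangle\big)$, so that the orthogonal projection $\pi_{L^2(\mathcal{F})}$ is well defined and uniquely characterized by two properties: (i) $\pi_{L^2(\mathcal{F})}(W) \in L^2(\mathcal{F})$, and (ii) the residual $W - \pi_{L^2(\mathcal{F})}(W)$ is orthogonal to every $V \in L^2(\mathcal{F})$, i.e. $\langle W - \pi_{L^2(\mathcal{F})}(W), V\rangle = 0$. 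Closedness holds because $\mathcal{F} \subset \mathcal{F}_{0}$ is a sub-$\sigma$-algebra and $L^2$-limits of $\mathcal{F}$-measurable maps remain $\mathcal{F}$-measurable, after passing to an almost surely convergent subsequence.

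Next I would show that the candidate $\Esp{W \mid \mathcal{F}}$, defined componentwise in the finite-dimensional space $\Ecal$ of dimension $m$, satisfies both (i) and (ii). Membership in $L^2(\mathcal{F})$ follows from the $\mathcal{F}$-measurability of each conditional expectation $\Esp{W_i \mid \mathcal{F}}$ together with conditional Jensen, which gives $\Esp{\norm{\Esp{W \mid \mathcal{F}}}^{2}} \leq \Esp{\norm{W}^{2}} < +\infty$. For (ii), I would expand the inner product using the weighted coordinate form $\scalprodE{z}{y} = \sum_{i=1}^{m} \lambda_{i} z_{i} y_{i}$:
\[
\langle W - \Esp{W \mid \mathcal{F}},\, V\rangle = \sum_{i=1}^{m} \lambda_{i}\, \Esp{\big(W_{i} - \Esp{W_{i} \mid \mathcal{F}}\big) V_{i}}.
\]
Each term vanishes: since $V_{i}$ is $\mathcal{F}$-measurable, the defining property of scalar conditional expectation yields $\Esp{W_{i} V_{i}} = \Esp{V_{i}\, \Esp{W_{i} \mid \mathcal{F}}}$, hence $\Esp{\big(W_{i} - \Esp{W_{i} \mid \mathcal{F}}\big) V_{i}} = 0$.

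By uniqueness of the orthogonal projection onto a closed subspace, properties (i) and (ii) force $\pi_{L^2(\mathcal{F})}(W) = \Esp{W \mid \mathcal{F}}$, which is the claim. The genuinely delicate points are organizational rather than conceptual: one must ensure the vector-valued conditional expectation is well defined, which in finite dimension is immediate by taking it coordinatewise, and one should note that the positivity $\lambda_{i} > 0$ is used only to guarantee that $\langle\cdot,\cdot\rangle$ is a bona fide inner product so the projection theorem applies, and not in the orthogonality computation itself. I expect the only minor obstacle to be stating the closedness of $L^2(\mathcal{F})$ cleanly, since everything else then follows termwise from the one-dimensional result.
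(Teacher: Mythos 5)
Your proposal is correct and takes essentially the same approach as the paper: the heart of both arguments is the coordinatewise expansion of the weighted inner product and the tower/defining property of scalar conditional expectation, showing $\langle W - \Esp{W \mid \mathcal{F}}, V\rangle = 0$ for every $V \in L^{2}_{\Ecal}(\mathcal{F})$. You merely make explicit the standard Hilbert-space scaffolding (closedness of $L^{2}(\mathcal{F})$, membership of $\Esp{W\mid\mathcal{F}}$ via conditional Jensen, and uniqueness of the orthogonal projection) that the paper delegates to the cited Durrett reference.
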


\begin{proof}
Let $W \in L^{2}_{\Ecal}(\mathcal{F}_{0}). ~ \forall Z \in L^{2}_{\Ecal}(\mathcal{F}),$
\begin{equation*}
\begin{split}
\scalprod{Z}{W} & = \Esp{\Esp{\scalprodE{Z}{W} \mid \mathcal{F}}} \\
 & = \Esp{\Esp{\sum_{i = 1}^{n} \lambda_{i}z_{i}w_{i} \mid \mathcal{F}}} \\
 & = \Esp{\sum_{i = 1}^{n}\lambda_{i}\Esp{z_{i} w_{i} \mid \mathcal{F}}}\\
 & = \Esp{\sum_{i = 1}^{n}\lambda_{i}z_{i}\Esp{ w_{i} \mid \mathcal{F}}} \: \mbox{because } \forall i, \: z_{i} \in \mathcal{F} \\
 & = \Esp{\scalprodE{Z}{\Esp{W \mid \mathcal{F}}}}~.\\
\end{split}
\end{equation*}
\end{proof}
We have $\qgs(Y) - \Esp{Y \mid \qgs(Y)}\in \mathcal{F} =  \sigma(\qgs(Y))$, 
so $Y - \Esp{Y\mid \qgs(Y)}$ is orthogonal to $\qgs(Y) - \Esp{Y \mid \qgs(Y)}$ for the inner product of the space $L^{2}_{\Ecal}(\mathcal{F}_{0})$ according to proposition \ref{proj_ortho}.
Then, the Pythagorean theorem gives
\begin{equation}
\begin{aligned}
\Esp{\norm{Y - \qgs(Y)}^{2}} = \:&\Esp{\norm{Y - \Esp{Y\mid \qgs(Y)}}^{2}} + \\ &\Esp{\norm{\qgs(Y) - \Esp{Y \mid \qgs(Y)}}^{2}}\label{Pytha}~.
\end{aligned}
\end{equation}

We set $\tilde{\Gamm} = \{\tilde{\centro}_{1},\dots,\tilde{\centro}_{\card}\}$ such that $\forall i \in \{1\dots N\}, \tilde{\centro}_{i} = \Esp{Y\mid \yinclust{Y}{\Gamm^{\star}}{i}}$. 
\begin{itemize}
    \item $\Esp{\norm{Y - q_{\tilde{\Gamm}}(Y)}^{2}} \leq \Esp{\norm{Y - \Esp{Y\mid \qgs(Y)}}^{2}}$ : 
    \begin{align*}\forall y \in \Ecal, \exists i \in \{1,\dots,N\}, \norm{y - \Esp{Y\mid \qgs(Y) = \qgs(y)}} &= \norm{y - \Esp{Y\mid \yinclust{Y}{\Gamm^{\star}}{i}}}\\ &= \norm{y - \tilde{\centro}_{i}} \\ &\geq \norm{y - q_{\tilde{\Gamm}}(y)} \\ &\text{ by definition of } q_{\tilde{\Gamm}}.
    \end{align*}
    \item $\Esp{\norm{Y - q_{\tilde{\Gamm}}(Y)}^{2}} \geq \Esp{\norm{Y - \qgs(Y)}^{2}}$ by definition of $\Gamm^{\star}$
\end{itemize}
Then, $\Esp{\norm{Y - \qgs(Y)}^{2}} \leq \Esp{\norm{Y - \Esp{Y\mid \qgs(Y)}}^{2}}$.
\\
We conclude from Equation \eqref{Pytha}: 
$\Esp{\norm{\qgs(Y) - \Esp{Y \mid \qgs(Y)}}^{2}} = 0$ so that \linebreak $\qgs(Y) = \Esp{Y \mid \qgs(Y)}$.
\\
This is a known result (\citealt{Pages}) that, therefore, applies to our study with the random field $Y(X)$.

\section{Projection on a basis of maps}

\subsection{Functional Principal Components Analysis}\label{fpca_annex}

Every flood map $y(x)$ that we want to model can be considered as a matrix indicating the water depth at each pixel, i.e., as a function 
\begin{align*}
y_{x}\colon \mathbb{R}^{2} &\longrightarrow \mathbb{R} \\
z = (z_{1},z_{2}) &\longmapsto y_{x}(z).
\end{align*}

We work with a training database of $\ntrain$ maps $(y_{x^{i}}(z))_{i=1,\dots,\ntrain}$ and the objective here is to compute the projection of the flood maps in a FPCA basis of $\npc$ maps.
We first consider a functional orthonormal basis $\Phi(z) = (\Phi_{1}({z}),\dots,\Phi_{K}({z}))^{\top}$. D4 Daubechies wavelets (\citealt{Daubechies}), which are widely used in image processing, are chosen here, as \citealt{Perrin} shows their accuracy for a similar problem. The procedure consists in the following steps:  

\begin{enumerate}
    \item Find the coordinates $(\alpha(x^{i}))_{i=1,\dots,\ntrain}$ of the training maps $(y_{x^{i}}(z))_{i=1,\dots,\ntrain}$ in the $\Phi$ basis: \\
    $y_{x^{i}}(z) = \sum_{k=1}^{K} \alpha_{k}(x^{i})\Phi_{k}(z) = \Phi(z)^{\top} \alpha(x^{i})$ 
    \item Compute $\lambda_{k} = \Esp{\frac{\alpha_{k}(X)^2}{\sum_{j=1}^{K}\alpha_{j}(X)^2}}, \: k =  1\dots K$, order them by decreasing values
and select the $\tilde{K} \ll K$ most important vectors of the basis (the $\tilde{K}$ vectors associated with the highest $\lambda_{k}$) such that $\sum_{k=1}^{\tilde{K}} \lambda_{k} \geq p$, with $p$ a given proportion of energy. Then, we have $y_{x^{i}}(z) \simeq \sum_{k=1}^{\tilde{K}} \alpha_{k}(x^{i})\Phi_{k}(z) = \tilde{\Phi}(z)^{\top} \tilde{\alpha}(x^{i}) $
    \item Apply PCA in $\mathbb{R}^{\tilde{K}}$ on the dataset of coefficients evaluated at the design points $(\tilde{\alpha}(x^{i}))_{i=1,\dots,\ntrain}$. 
Denote $t_{1}(x^{i}),\dots,t_{\npc}(x^{i}), i = 1,\dots,\ntrain$, the coordinates of the first $\npc$ principal components and $\omega_{1},\dots,\omega_{\npc}$ the associated eigenvectors in $\mathbb{R}^{\tilde{K}}.$
We have $\tilde{\alpha}(x^{i}) \simeq \sum_{j=1}^{\npc} t_{j}(x^{i})w_{j} = \Omega^{\top}t(x^{i})$, with $\Omega$ the $\npc \times \tilde{K}$ projection matrix.
\end{enumerate}

Finally we can write $y_{x^{i}}(z) \simeq  (\Omega\tilde{\Phi}(z))^{\top}t(x^{i})$, with $\Omega\tilde{\Phi}$ a basis of $\npc$ maps. $\tilde{K}$ (or, equivalently, $p$) and $\npc$ are tuned by cross-validation.

\subsection{Inverse FPCA}\label{inverse_fpca}

As described in Section~\ref{Metamodelisation}, Gaussian processes are used to predict $\hat{t}(x^{\star})$ for a new input $x^{\star}$. 
The following steps provide the map $\Ypred(x^{\star})$ from the coordinates $\hat{t}(x^{\star})$:  
\begin{enumerate}
    \item Compute the coefficients of the vector $\alpha(x^{\star})$
    \begin{itemize}
        \item for $k = \tilde{K} + 1,\dots, K, \hat{\alpha}_{k}(x^{\star}) = \frac{1}{\ntrain}\sum_{i=1}^{\ntrain}\alpha_{k}(x^{i})$
        \item for $k = 1,\dots, \tilde{K}, \hat{\alpha}_{k}(x^{\star}) =  \sum_{j=1}^{\npc}\hat{t}_{j}(x^{\star})\omega_{j}$
    \end{itemize}
    \item Compute $(\hat{y}_{x^{\star}}(z))$ with $\hat{y}_{x^{\star}}(z) =  \sum_{k=1}^{K} \hat{\alpha}_{k}(x^{\star})\Phi_{k}(z)$
   
\end{enumerate}

 The computation of $y_{x^{\star}}(z)$ at the targeted family of pixels provides the predicted map $\Ypred(x^{\star})$.
 
\section{Probability Distributions}
\label{probas_inputs}
\newcolumntype{P}[1]{>{\centering\arraybackslash}p{#1}}

The density functions of the offshore variables are obtained through empirical observations as described in Section \ref{sec:flood_description}. 
They are shown in Table \ref{table_densities}, along with their support.
The location of the breach is an integer in $\{1,\dots,10\}$, and the erosion rate is in $[0,1]$.

\begin{table}
\centering
\begin{tabular}{ |P{1.8cm}|P{7cm}| P{1.7cm}|P{4.8cm}|}  
 \hline
 {\footnotesize Variable $X_i$} & {\footnotesize Distribution $I_i$} & {\footnotesize Support} & {\footnotesize Law type}\\
 \hline \hline
 \begin{tabular}{l}
\shortstack{$T$ \\ \color{white}{a}\color{black} \\ {\footnotesize Tide}} 
\end{tabular}
& \begin{minipage}{0.4\textwidth}
\color{white}{a}\color{black}\\
      \includegraphics[width=1\textwidth]{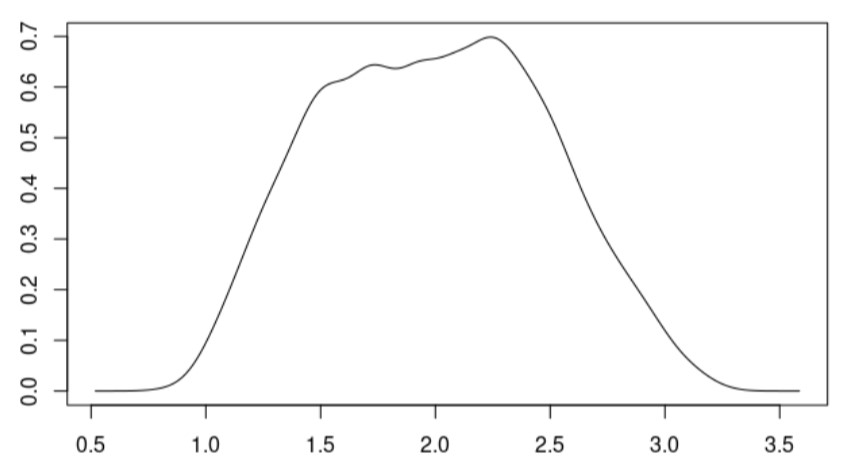}
    \end{minipage}
    & {\footnotesize [0.52,3.59]} & Empirical
\\ 
 \hline
\begin{tabular}{l}
\shortstack{$S$ \\ \color{white}{a}\color{black} \\ {\footnotesize Surge}} \end{tabular}
& \begin{minipage}{0.4\textwidth}

\color{white}{a}\color{black}\\
      \includegraphics[width=1\textwidth]{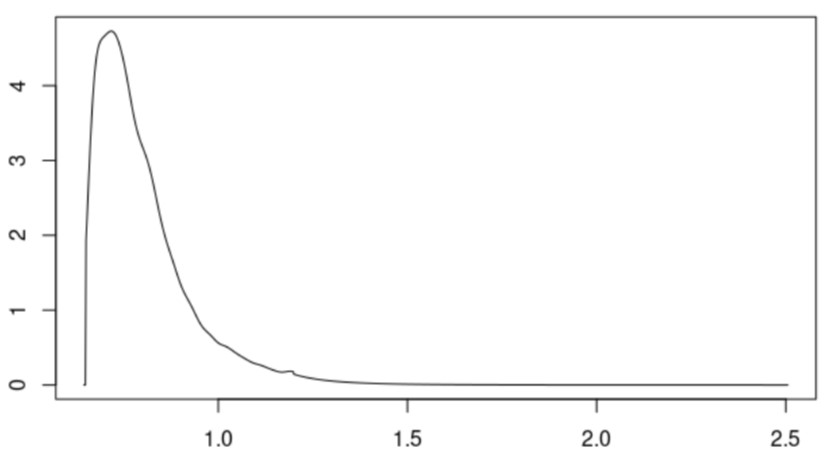}
    \end{minipage} 
    & {\footnotesize[0.65,2.5]} &   \begin{tabular}{l}
{\small Combination of a genera-} \\ {\small lised Pareto distribution} \\ {\small above a threshold of 0.55m} \\ {\small and the empirical distribu-}\\{\small tion derived from the tide}\\ gauge measurements
  \end{tabular}\\
  \hline
  \begin{tabular}{l}
\shortstack{$t_{0}$ \\ \color{white}{a}\color{black} \\ {\footnotesize Phase} \\ {\footnotesize difference}}
\end{tabular} & \begin{minipage}{0.4\textwidth}
 \color{white}{a}\color{black}\\
      \includegraphics[width=1\textwidth]{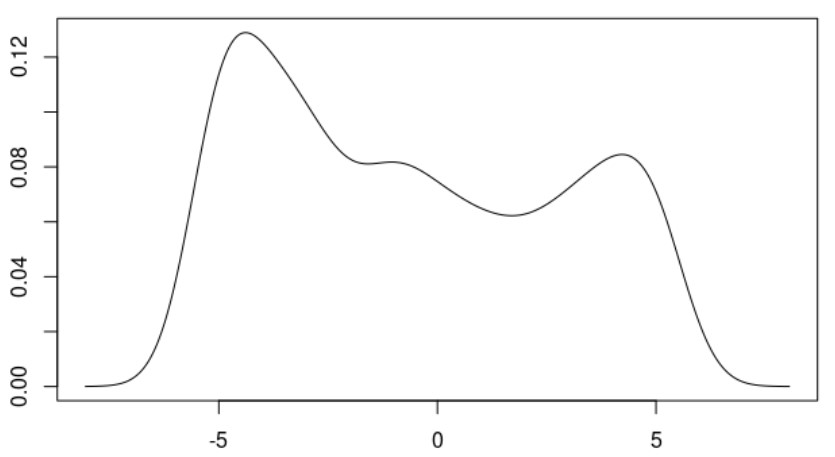}
    \end{minipage} & {\footnotesize[-8.05,8.05]} & Empirical \\
  \hline
 \begin{tabular}{l}
\shortstack{$t_{-}$ \\ \color{white}{a}\color{black} \\ {\footnotesize Duration} \\ {\footnotesize rising} \\ {\footnotesize part}}
\end{tabular}
&\begin{minipage}{0.4\textwidth}
\color{white}{a}\color{black}\\
      \includegraphics[width=1\textwidth]{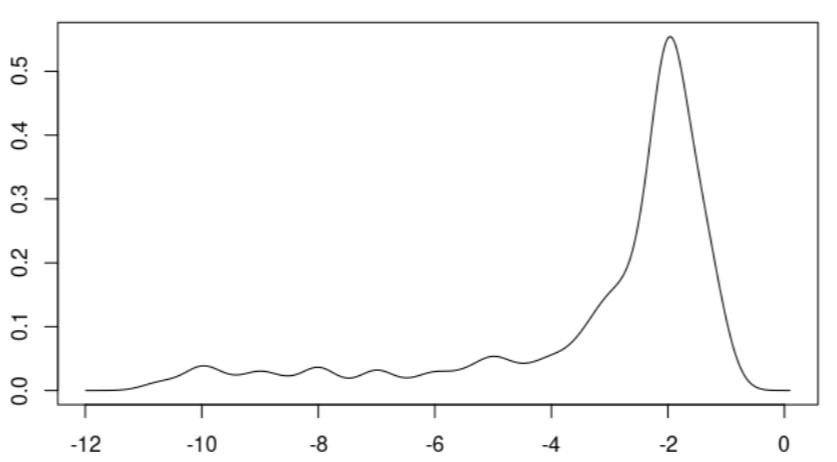}
    \end{minipage} & {\footnotesize[-12,0]} & Empirical\\
  \hline
\begin{tabular}{l}
    \shortstack{$t_{+}$ \\ \color{white}{a}\color{black} \\ {\footnotesize Duration} \\ {\footnotesize falling} \\ {\footnotesize part}}
  \end{tabular} & \begin{minipage}{0.4\textwidth}
      \color{white}{a}\color{black}\\\includegraphics[width=1\textwidth]{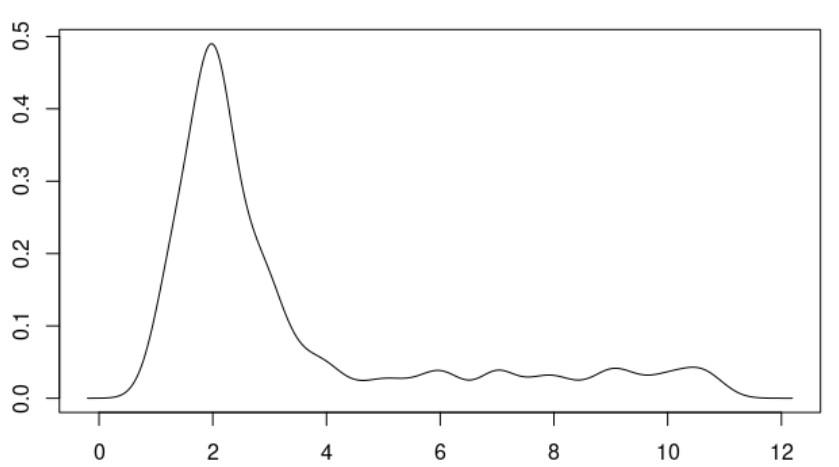}
    \end{minipage} & {\footnotesize[0,12.2]} & Empirical\\
\hline
\end{tabular}
\caption{Variables describing the offshore conditions and their probability distribution.
}
\label{table_densities}
\end{table}

The affine transformation $a$ from $\supp(f_{X})$ to $[-1,5]^{7}$, introduced in the Campbell case, is then
\begin{align*}
a \colon  \supp(f_{X}) &\to [-1,5]^{7}\\
x &\mapsto
\begin{bmatrix}
    -1+6\frac{x-0.52}{3.59-0.52} \\ -1+6\frac{x-0.65}{2.5-0.65}\\ -1+6\frac{x-8.05}{8.05+8.05}\\
    -1+6\frac{x+12}{12}\\
    -1+6\frac{x-12.2}{12.2}\\
    -1+6\frac{x-1}{9}\\
    -1+6x
\end{bmatrix}
~.
\end{align*}

\section{Voronoi cells on the FPCA axes}\label{axes_fpca}

In the flooding case, the two first FPCA axes summarize $98\%$ of the inertia of the training 1300 maps. It is relevant to visualize the quantization cells through their two first FPCA coefficients. Among other things, this allows to have a better understanding of their size. 
Figure~\ref{pca_clusters_zoom} shows the distribution of $10^5$ maps sampled with the uniform density $\biais$ 
and the prototype maps in the FPCA basis calculated for the maps without breach. 
As expected, cell 1 is almost a Dirac at the empty map whose coefficients are $(-27,8)$. 
Moreover, the largest cells are the ones associated to the most extreme floodings. For instance, cell 5 contains very diversified maps in terms of water depth magnitude and spatial distribution. It is an expected property of the optimal Voronoi cells: the cells with high probabilities are small and vice versa. However, this characteristic may not be true is some specific contexts, as shown in Appendix~\ref{counter_example}.
Figure~\ref{pca_ponderation_zoom} shows the probability density of the maps expressed in the 2 first FPCA coefficients. 
It is a histogram where each bin has a colour corresponding to the weight $\log\left(\sum_{k=1}^{n_\text{bin}} \frac{f(\tilde X^k)}{\biais(\tilde X^k)}\right)$ associated to the IS density ratio of the maps, $n_\text{bin}$ being the number of maps falling within each bin. It is essential to understand the position of the prototype maps and the probability mass of the cells. 
It illustrates the concentration of the mass in cell 1, 
with a high $\log\left(\sum_{k=1}^{n_{\text{bin}}} \frac{f(X^{k})}{\biais(X^{k})}\right)$ compared to the other cells. 
It also explains the position of the centroid of the cell 5, which is located near the left limit. This is due to the low weight terms at the right of the cell.

\begin{figure}
\centering
  \begin{subfigure}[a]{0.67\textwidth}
\includegraphics[width=1\linewidth, height = 0.7\linewidth]{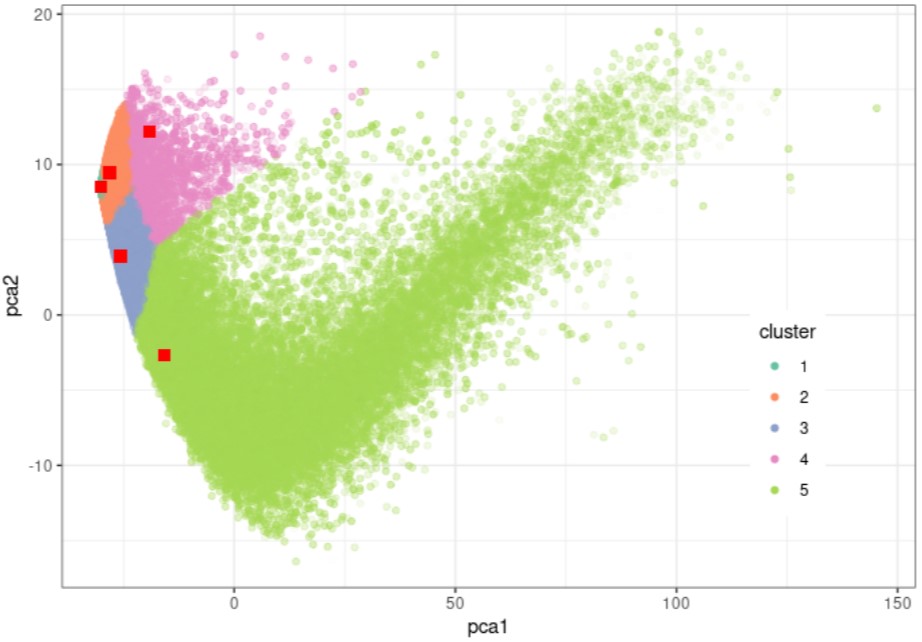}
\caption{}
  \label{pca_clusters_zoom}

  \end{subfigure}
  \begin{subfigure}[b]{0.67\textwidth}
\includegraphics[width=1\linewidth]{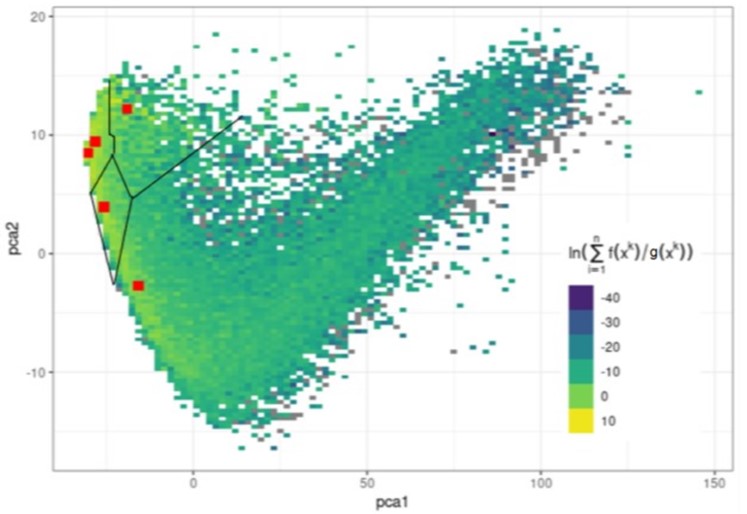}
\caption{}
\label{pca_ponderation_zoom}
  \end{subfigure}
  \caption{
(a) Distribution of $10^5$ points sampled with density $\biais$ in the different Voronoi cells in the two first FPCA axes of maps without breach. 
(b) 2D histogram where for each bin, the quantity 
$log\left(\sum_{k=1}^{n_\text{bin}} \frac{f(X^{k})}{\biais(X^{k})}\right)$ 
is displayed with the color scale. The prototype maps are represented by the red squares. The limits of cells 2, 3 and 4 are drawn with black lines. 
Grey pixels are associated to a null weight $\left(\sum_{k=1}^{n_\text{bin}} \frac{f(X^{k})}{\biais(X^{k})}\right)$.
\label{fig-fpcaFloodVisu}
}
\end{figure}

\section{Relation between cells size and probability}
\label{counter_example}

As the Lloyd's algorithm minimizes the quantization error $e(\Gamm) = \left[\Esp{\norm{Y-\rep(Y)}^{2}}\right]^{\frac{1}{2}}$, we can expect that the largest Voronoi cells are associated to a small probability mass. However, it may not be the case with specific random variables $Y$.

For instance, let consider a random variable $Y\in [-20,20]$, with a density function $d = \frac{1}{10}\mathds{1}_{[-20,-10]} + \frac{9}{10}\mathds{1}_{[0,20]}$. 

If we perform the Lloyd's algorithm with two centroids $\gamma_{1}$ and $\gamma_{2}$, it gives $\gamma_{1} = -15$ and $\gamma_{2} = 10$ and two Voronoi cells: $C_{1} = [-20,-2.5]$ with probability mass $p_{1} = \frac{1}{10}$ and $C_{2} = [-2.5,20]$ with probability mass $p_{2} = \frac{9}{10}$.
\end{document}